\documentclass[ejsv2]{imsart}

\usepackage{adjustbox}
\usepackage{algorithm}%
\usepackage{algorithmicx}%
\usepackage{algpseudocode}%
\usepackage{amsmath}
\usepackage[title]{appendix}%
\usepackage[british]{babel}
\usepackage{bigints}
\usepackage{bm}
\usepackage{booktabs}
\usepackage{float}
\usepackage{graphicx}
\usepackage{listings}%
\usepackage{mathtools}
\usepackage{multirow}%
\usepackage{siunitx}
\usepackage[dvipsnames]{xcolor}
\RequirePackage[numbers]{natbib}
\RequirePackage[colorlinks,citecolor=blue,urlcolor=blue]{hyperref}%Must be added last
\arxiv{2010.00000}
\startlocaldefs
\theoremstyle{plain}

\newtheorem{theorem}{Theorem}[section]
\newtheorem{lemma}[theorem]{Lemma}
\theoremstyle{definition}
\newtheorem{definition}[theorem]{Definition}

\theoremstyle{remark}

\endlocaldefs
\newcommand{\balpha}{\bm{\alpha}}

\newcommand{\bCD}{\mathcal{CD}}
\newcommand{\bD}{\mathcal{D}}
\newcommand{\bmu}{\bm{\mu}}
\newcommand{\bv}{\bm{v}}
\newcommand{\bX}{\bm{X}}

\newcommand{\bxo}{\bm{x}_{o}}

\newcommand{\bxoi}{\bm{x}_{o,i}}
\newcommand{\bx}{\bm{x}}
\newcommand{\by}{\bm{y}}

\newcommand{\bz}{\bm{z}}
\newcommand{\evg}{\tilde{E}_{ikg}^{(r)}}
\newcommand{\cvg}{\tilde{C}_{ikg}^{(r)}}
\newcommand{\mathR}{\mathbb{R}}
\newcommand{\ms}{\mathcal{M}}
\newcommand{\os}{\mathcal{O}}
\newcommand{\simplex}{\mathbb{S}_p}
\newcommand{\bpsi}{\bm{\psi}}

\definecolor{cerulean}{HTML}{2A52BE}
\begin{document}
\begin{frontmatter}
\title{Handling mild outliers and unobserved values in compositional datasets using finite mixtures of mean-parametrised Dirichlet models}
%\title{A sample article title with some additional note\thanksref{t1}}
\runtitle{Finite mixtures of mean-parametrised Dirichlet models}
%\thankstext{T1}{A sample additional note to the title.}

\begin{aug}
%%%%%%%%%%%%%%%%%%%%%%%%%%%%%%%%%%%%%%%%%%%%%%%
%% Only one address is permitted per author. %%
%% Only division, organization and e-mail is %%
%% included in the address.                  %%
%% Additional information can be included in %%
%% the Acknowledgments section if necessary. %%
%% ORCID can be inserted by command:         %%
%% \orcid{0000-0000-0000-0000}               %%
%%%%%%%%%%%%%%%%%%%%%%%%%%%%%%%%%%%%%%%%%%%%%%%

\author[A]{\fnms{Jason}~\snm{Pillay}\ead[label=e1]{js.pillay@up.ac.za}},
\author[A,B]{\fnms{Andriëtte}~\snm{Bekker}\ead[label=e2]{andriette.bekker@up.ac.za}\orcid{0000-0000-0000-0000}},
\author[C]{\fnms{Cristina}~\snm{Tortora}\ead[label=e3]{cristina.tortora@sjsu.edu}\orcid{0000-0001-8351-3730}},
and
\author[D]{\fnms{Antonio}~\snm{Punzo}\ead[label=e4]{antonio.punzo@unict.it}\orcid{0000-0001-7742-1821}}
%%%%%%%%%%%%%%%%%%%%%%%%%%%%%%%%%%%%%%%%%%%%%%
%% Addresses                                %%
%%%%%%%%%%%%%%%%%%%%%%%%%%%%%%%%%%%%%%%%%%%%%%
\address[A]{Department of Statistics,
University of Pretoria\printead[presep={,\ }]{e1,e2}}
%\runauthor{F. Author et al.}
\address[B]{
National Institute for Theoretical and Computational Sciences (NITheCS), Pretoria Node, University of Pretoria, Pretoria, 0028\printead[presep={,\ }]{e2}}
%\runauthor{F. Author et al.}
\address[C]{Department of Mathematics and Statistics,
San José State University\printead[presep={,\ }]{e3}}
%\runauthor{F. Author et al.}
\address[D]{Department of Economics and Business,
University of Catania\printead[presep={,\ }]{e4}}
%\runauthor{F. Author et al.}
\end{aug}

\begin{abstract}
%The abstract should summarize the contents of the paper.
%It should be clear, descriptive, self-explanatory and not longer
%than 200 words. It should also be suitable for publication in
%abstracting services. Formulas should be used as sparingly as
%possible within the abstract. The abstract should not make
%reference to results, bibliography or formulas in the body
%of the paper---it should be self-contained.
%
%This is a sample input file.  Comparing it with the output it
%generates can show you how to produce a simple document of
%your own.
Heterogeneous compositional data may be simultaneously affected by missing values and atypical points, posing challenges for both clustering and outlier detection. We develop a mixture model for incomplete compositional data under Huber’s contamination model, with contamination defined directly on the simplex and on observations that may be missing at random. The model provides a principled representation of outliers and allows the distribution of missing parts to be derived while accounting for contamination. We establish that maximisation of the observed log-likelihood constructed from contaminated, mean-parametrised Dirichlet densities is a convex optimisation problem. We then develop a tailored expectation-maximisation. The E-step incorporates the moments from the distribution of the missing parts of the data. Although the resulting parameter estimates are not available in closed form, the maximisation step admits tractable element-wise iterative updates. Numerical experiments demonstrate the performance of the proposed approach under varying percentage of missingness and contamination, and different sample size. An application to the American Time Use Survey identifies two interpretable clusters corresponding to work-intensive and sociable recreational days, while revealing atypical time-use compositions. In contrast, a conventional Dirichlet mixture model identifies four clusters, reflecting the influence of outliers and an artificial splitting of one cluster.

\end{abstract}

\begin{keyword}[class=MSC]
\kwdgroup[type=primary]  {\kwd{62-08} \kwd{62H30}}
\kwdgroup[type=secondary]{\kwd{62F30} \kwd{62P25 }}
\end{keyword}

\begin{keyword}
\kwd{Contamination}
\kwd{EM algorithm}
\kwd{Missing at random}
\kwd{Compositional data analysis}
\end{keyword}

\end{frontmatter}
%%%%%%%%%%%%%%%%%%%%%%%%%%%%%%%%%%%%%%%%%%%%%%
%% Please use \tableofcontents for articles %%
%% with 50 pages and more                   %%
%%%%%%%%%%%%%%%%%%%%%%%%%%%%%%%%%%%%%%%%%%%%%%
%\tableofcontents

\section{Introduction}
For a given dimension $p \in \mathbb{N}_{+}$, a random compositional vector $\bX$ exists on the $p$-dimensional open unit simplex,
    \[\mathbb{V}_p 
    = \left\{ \bx = [x_1,\ldots,x_p]^{\top} 
    : \text{$x_k \in (0,1)$, $k \in \{1,\ldots,p\}$, $\sum_{k=1}^p x_k < 1$} \right\},\] and on the $p$-dimensional closed unit simplex
\begin{equation*}
    \mathbb{S}_p 
    = \left\{ \bx = [x_1,\dots,x_p]^{\top} 
    : \text{$x_k \in (0,1)$, $k \in \{1,\ldots,p\}$, $\sum_{k=1}^p x_k = 1$}
     \right\}.
\end{equation*}
The Dirichlet distribution is a multivariate generalisation of the Beta distribution, defining a family of unit sum-constrained probabilities or proportions on $\simplex$. A random vector $\bX\in\mathbb{S}_p$ is said to follow a Dirichlet distribution with parameter vector $\balpha = [\alpha_1,\ldots,\alpha_p]^{\top} \in \mathbb{R}_+^p$, denoted as $\bX \sim \mathcal{D}(\balpha)$, if its probability density function (density) is  %\textcolor{magenta}{We need to use a letter for this like  $f_{\mathcal{D}_\alpha}$}

\begin{equation}
\label{pdf_dirichlet}
    f_{\bD }(\bx;\balpha) 
    = \begin{cases}
        \dfrac{\Gamma(\alpha_0)}
        {\displaystyle\prod_{k=1}^p \Gamma(\alpha_k)}
        \displaystyle\prod_{k=1}^p x_k^{\alpha_k-1} &  \bx \in \mathbb{S}_p \\
        0         & \text{otherwise},
    \end{cases}
\end{equation}
where $\alpha_0 = \|\balpha\|_1 = \displaystyle\sum_{k=1}^p \alpha_k$, and $\Gamma(\cdot)$ denotes the gamma function. We refer to the density in \eqref{pdf_dirichlet} as the $\balpha$-parameterised Dirichlet distribution. This distribution is usually the first choice for analysing compositional data or measurements of proportions as it is supported directly on the simplex and has its moments in closed form with respect to the Lebesgue measure. It has shown applicability, especially as building blocks for increasingly complex models for various fields within statistics such as time series, Bayesian inference, and regression models, among others \cite{dirichlet_forensic, dirichlet_generator, dirichlet_kernel, dirichlet_matrices, dirichlet_neural_network, dirichlet_process,dirichlet_robotics,dirichlet_spatial_regression, dirichlet_time_series}.

Further, the Dirichlet has been extended to model complex distributions on the simplex. There are generalisations, extensions, and reparameterisations to enhance the Dirichlet's flexibility \citep{dirichlet_flexible,dirichlet_generalised, dirichlet_hyper, general_louiville}. In particular, we focus on modelling approaches that address the accommodation of outliers on the simplex. Model-based concepts around outliers stem from Huber's parametric contamination model \citep{huber_contamination}. Formally, for distribution $\bm P_{\bm \theta} $ with parameter set $\bm \theta$, and probability $\varepsilon \in (0,1)$, there exists a distribution $\bm P_{\bm \theta^*}$ belonging to the same family as $\bm P_{\bm \theta}$, but with a corresponding contaminated parameter set $\bm \theta^* \ne \bm \theta$ so that random vector $\bm X$ is drawn with probability
\[
\varepsilon \bm P_{\bm \theta^*} + (1 - \varepsilon)\bm P_{\theta}.
\]
This model has been applied to the Dirichlet distributions such as \citep{mode_cont_dirichlet, van2026contaminated}, under a mode and mean reparameterisation, respectively. 

The utility of the Dirichlet distribution is vast and widespread, especially in contexts of atypical points, but cannot be fitted to incomplete datasets, currently. Instead, other model-based approaches employ transformations to convert compositions from the simplex onto the real space. Currently, the multivariate normal distribution is assumed after incomplete datapoints are transformed using the additive log-ratio function (ALR) and then treated as normally distributed \citep{alr_chem_data, alr_eda, alr_em,alr_package}. Imputations and model-fitting are done using the normal distribution, but the estimated mean vector and covariance matrix do not correspond to estimating the mean vector and covariance matrix of the compositions themselves. Further, the transform depends on the choice of denominator, which introduces bias in imputations when back-transformed onto the simplex.

In approaches with complete data, a relied-upon transformation is the isometric log-ratio (ILR) due to its isometry \citep{ilr_analysis, ilr_kmeans, irl_regression}. Despite this, the positions of outliers relative to the bulk of the data are distorted. Thus, while transforming the data onto the real space in a way that preserves distance still poses a challenge for model-based techniques to detect them. In other words, to simultaneously account for missing values and outliers, there is a reliance on transformations to move the challenges into a space where techniques are available. This may be due, in part, to the fact that the very lack of techniques defined directly on the simplex forces the need for transformations. 
\\
\\
The rest of this paper is structured as follows: in Section \ref{prelims} we propose a finite mixture of Dirichlet distributions under parameter contamination. We change the $\balpha$ parameterisation to the mean-variability parameterisation as in \cite{van2026contaminated} so that the interpretation under Huber's model is consistent. We derive the distribution of the missing parts of a Dirichlet distributed random vector under a missing-at-random (MAR) mechanism in Section \ref{methodology}. We further prove that uniform noise on the simplex follows a Dirichlet distribution and use this fact as a benchmark for outlier detection. We then show in Section \ref{inference}, under the MAR assumption, the maximum-likelihood estimates (MLEs) of a contaminated mean-parametrised Dirichlet distribution exist and are unique and note under which conditions these apply for finite mixtures. We then develop a new algorithm to calculate the MLEs, and consequently provide cluster identification and outlier detection simultaneously. Performance of the algorithm is assessed in the simulation study in Section \ref{simulation}. The applicability is demonstrated in Section \ref{application}, where a 2-component contaminated mean-parametrised Dirichlet model is identified as the most appropriate fit for the dataset, containing more than 16\% of outliers despite more than 49\% of cells being missing.

%This sample helps you to create a properly formatted \LaTeXe\ manuscript.
%%%%%%%%%%%%%%%%%%%%%%%%%%%%%%%%%%%%%%%%%%%%%%
%% `\ ' is used here because TeX ignores    %%
%% spaces after text commands.              %%
%%%%%%%%%%%%%%%%%%%%%%%%%%%%%%%%%%%%%%%%%%%%%%
%Prepare your paper in the same style as used in this sample .pdf file. Try to avoid excessive use of italics and bold face.
%Please do not use any \LaTeXe\ or \TeX\ commands that affect the layout or formatting of your document (i.e., commands like \verb|\textheight|, \verb|\textwidth|, etc.).

\section{Preliminaries}
\label{prelims}
We briefly revisit some definitions and theorems needed for the novelty in the inferential procedure in Section \ref{inference}. Parameter estimation is seen as an optimisation problem, there is an overlap between calculus tools and statistical concepts. 

\subsection{Concepts in optimisation}
\begin{definition} [coerciveness]
\label{coercive_def}
Let $f : \mathR^p \rightarrow \mathR$ be a continuous function. Then $f$ is called coercive if
\begin{align}
\label{coercive_eq}
    \underset{||\bx||\rightarrow\infty}{lim}f(\bx) \rightarrow \infty.
\end{align}
\end{definition}

Definition \ref{coercive_def} leads to the following property that is useful in optimisation theory:
\begin{theorem}[Existence of optimiser]
\label{existence_theorem}
Let $f : \mathR^p \rightarrow \mathR$ be a continuous and coercive function and let $\mathbb{D}_f \subseteq \mathR^p$  be a nonempty, closed set. Then function $f$ has a global minimum point over $\mathbb{D}_f$. 
\end{theorem}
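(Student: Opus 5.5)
The plan is to reduce the problem to minimising a continuous function over a compact set and then apply the Weierstrass extreme value theorem. First, fix an arbitrary point $\bx_0 \in \mathbb{D}_f$, which exists because $\mathbb{D}_f$ is nonempty. Consider the sublevel set
\[
L = \left\{ \bx \in \mathbb{D}_f : f(\bx) \le f(\bx_0) \right\}.
\]
It contains $\bx_0$, so it is nonempty. Any global minimiser of $f$ over $L$ will also be a global minimiser over $\mathbb{D}_f$. Indeed, points of $\mathbb{D}_f \setminus L$ have $f(\bx) > f(\bx_0) \ge \min_L f$.

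The main step is to show that $L$ is compact, that is, closed and bounded in $\mathR^p$ by Heine--Borel.

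\textbf{Closedness.} We write $L = \mathbb{D}_f \cap f^{-1}\big((-\infty, f(\bx_0)]\big)$. The preimage of a closed set under the continuous map $f$ is closed, and $\mathbb{D}_f$ is closed by hypothesis. Hence $L$ is an intersection of two closed sets, and so it is closed.

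\textbf{Boundedness.} We argue by contradiction. Suppose $L$ is unbounded. Then there is a sequence $\bx_n \in L$ with $\|\bx_n\| \to \infty$. By coerciveness in the sense of Definition \ref{coercive_def}, we have $f(\bx_n) \to \infty$. In particular, $f(\bx_n) > f(\bx_0)$ for all large $n$, which contradicts $\bx_n \in L$.

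Finally, $f$ restricted to the nonempty compact set $L$ is continuous. By the extreme value theorem it attains its minimum at some $\bx^\ast \in L$. By the opening remark, $\bx^\ast$ is then a global minimum point of $f$ over $\mathbb{D}_f$.

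The only delicate point is the boundedness step, specifically how the limit in \eqref{coercive_eq} is read. I would use it in its standard meaning: for every $M$ there is an $R$ such that $\|\bx\| > R$ implies $f(\bx) > M$. With that reading, the sequence argument goes through directly. Taking $M = f(\bx_0)$ even gives the explicit bound $L \subseteq \{\|\bx\| \le R\}$, which makes boundedness immediate without a contradiction.
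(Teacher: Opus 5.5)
Your proof is correct. Closedness of the sublevel set $\{\bx\in\mathbb{D}_f : f(\bx)\le f(\bx_0)\}$, boundedness from coerciveness (read, as you do, as ``$f(\bx)>M$ whenever $\|\bx\|>R$''), and the extreme value theorem together give a global minimiser over $\mathbb{D}_f$. The paper states this theorem without proof as a standard optimisation fact, and your sublevel-set argument is the standard proof it implicitly relies on.
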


\begin{definition}[Local Lipschitz continuity]
    A function $f: \mathR^p \rightarrow \mathR$ is locally Lipschitz continuous at $\bx$ if there exists a $\delta_{\bx}>0$ and an $M_{\bx} >0$ so that
    \begin{align}
      ||f(\bx) - f(\by)|| \leq M_{\bx} ||\bx - \by||,
    \end{align}
    for all $\by$ in some neighbourhood of $\bx$. That is, for all $\by$ such that $||\bx - \by||<\delta_{\bx}$.
\end{definition}

\subsection{mean-parametrised Dirichlet distribution} 
A $p \times 1$ random vector $\bX$ follows a mean-parametrised Dirichlet distribution with mean vector $\bmu \in \simplex$ and variability parameter $\gamma >0$ if its probability density function (or density, for short) is given as:
\begin{align}
    \label{mean_dirichlet}
    f_{\mathcal{D}}(\bx;\bmu,\gamma) = \begin{cases}
        \frac{\Gamma\left(\frac{1}{\gamma} \right)}{\displaystyle\prod_{k=1}^p\Gamma\left(\frac{\mu_k}{\gamma} \right)} \displaystyle\prod_{k=1}^px_k^{\frac{\mu_k}{\gamma}-1} & \bx \in \simplex \\
        0 & \text{otherwise},
    \end{cases}
\end{align}
and is denoted as $\bX \sim \mathcal{D}_{\simplex}(\bmu,\gamma)$. The density in \eqref{mean_dirichlet} is related to the $\balpha$-parameterised Dirichlet distribution through the following links:
\begin{align*}
    \alpha_k{ \gamma } = \mu_k         &\implies \mu_k = \frac{\alpha_k}{\displaystyle ||\balpha||_1} \\
    ||\balpha||_1 = \frac{1}{\gamma} &\implies  \gamma = \frac{1}{||\balpha||_1}.
\end{align*}

For the $\balpha$ parameterisation, the variance of the $k^{th}$ element of $\bX$, namely $X_k$ is :
\begin{align*}
    \sigma_k^2 = var(X_k) = \frac{\frac{\alpha_k}{||\balpha||_1}\left(1 - \frac{\alpha_k}{||\balpha||_1}\right)}{||\balpha||_1 +1 } = \frac{\mu_k\left(1 - \mu_k \right)}{1 +\frac{1}{\gamma} }.
\end{align*}

Now, for a fixed value of $\mu_k$, the first derivative of $\sigma^2_k$ with respect to $\gamma$ is:
\begin{align*}
    \frac{\partial\sigma_k^2}{\partial\gamma} = \frac{ \mu_k(1-\mu_k ) }{ \left(1 + \frac{1}{\gamma}\right)^2 \gamma^2} = \frac{\frac{\sigma_k^2}{\gamma}}{ \left(1 + \frac{1}{\gamma}\right)},
\end{align*}
which is strictly positive for all values of $\gamma>0$. Thus, an increase in $\gamma$ is sufficient to cause an increase in $\sigma_k^2$. Further, this is not restricted to a specific value of $k$, implying that $\gamma$ increases the variance for all $k = 1, \dots,p$. Parameter $\gamma$ thus controls the variation in the distribution. The concept of accommodating outlier points through an inflated variation can be applied to density \eqref{mean_dirichlet} to produce the contaminated mean-parametrised Dirichlet distribution.
\subsection{Contaminated mean-parametrised Dirichlet distribution} 
A $p \times 1$ random vector $\bX$ follows a contaminated mean-parametrised Dirichlet distribution with mean vector $\bmu \in \simplex$, a variability parameter $\gamma >0$, a variation inflation parameter $\eta>1$, and a parameter $\varepsilon \in (0,1)$ if its density is given as:
\begin{align}
    \label{cont_dirichlet}
    f_{\mathcal{CD}}(\bx; \bmu,\gamma, \eta, \varepsilon) = \begin{cases} \underset{\text{contaminated component}}{\underbrace{\varepsilon f_{\mathcal{D}}(\bx;\bmu,\eta \gamma)}} + \underset{\text{reference component}}{\underbrace{(1-\varepsilon)f_{\mathcal{D}}(\bx;\bmu, \gamma)}}& \bx \in \simplex \\
        0 & \text{otherwise},
    \end{cases}
\end{align}
and is denoted as $\bX \sim \mathcal{CD}_p(\bmu,\gamma, \eta, \varepsilon)$. Under Huber's model, the parameter epsilon can be interpreted as the proportion of the population that are outliers, provided that $\varepsilon \in (0,0.5)$. That is
%\textcolor{magenta}{This is only true if $\epsilon<0.5$, Antonio and I disagree on this point. To avoid conflict we can add a sentence saying that if we add the constraint on $\epsilon$, the inflated component can be interpreted as modelling outlier} from the majority, that is
\begin{align*}
    \varepsilon = \mathbb{P}(\bX \text{ is an outlier point}).
\end{align*}
%CITE OCKERT'S MASTERS
Notice that $f_{\mathcal{CD}}(\bx;\bmu,\gamma) \rightarrow f_{\mathcal{D}}(\bx;\bmu,\gamma)$ as $\eta \rightarrow 1$ and/or as $\varepsilon \rightarrow 0$. In other words, the reference mean-parametrised Dirichlet density is a limiting case of its contaminated generalisation. For brevity, the density in \eqref{cont_dirichlet} and corresponding distribution will be referred to simply as the contaminated Dirichlet density and distribution, respectively.

\subsection{Finite mixtures of contaminated Dirichlet distributions}
When the observed proportions arise from multiple underlying subpopulations, a finite mixture of contaminated Dirichlet distributions provides a flexible model. 
The mixture density is given by
\begin{align}
    f_{\mathcal{CDM}}(\bx;\bpsi) = \sum_{g=1}^G \pi_g f_{\mathcal{CD}}(\bx;\bmu_g,\gamma_g, \eta_g, \varepsilon_g),
    \label{fmm}
\end{align}
where $\bpsi = \{\pi_g,\bmu_g,\gamma_g, \eta_g,\varepsilon_g: g=1,\dots,G\}$. 
The mixing proportions satisfy $\pi_g>0$, $\displaystyle\sum_{g=1}^G \pi_g = 1$, and $f_{\mathcal{CD}}(\cdot)$ is defined in \eqref{cont_dirichlet}.

\section{Methodology}
\label{methodology}
This section introduces the assumptions behind missing values, from which the distributions of the observed and missing parts of the random vector are established. Hereafter, a random vector $\bX \sim \mathcal{CD}(\bmu,\gamma,\eta,\varepsilon)$ is partitioned according to its observed and missing subvectors, as follows:
\[ 
\bX =
\begin{bmatrix}
\bX_m \\
\bX_o
\end{bmatrix},
\]
where the subscripts $m$ and $o$ denote the missing and observed parts, respectively. Let $\mathcal{M} \subset \{1,\dots,p\}$ denote the index set of missing parts, with cardinality $p_m = |\mathcal{M}|$, and let $\mathcal{O} = \{1,\dots,p\} \setminus \mathcal{M}$ denote the index set of observed parts, the complement set of $\mathcal{M}$, with cardinality $p_o = |\mathcal{O}|$. Accordingly, $\bX_m$ and $\bX_o$ denote the subvectors of $\bX$ formed by the parts indexed by $\mathcal{M}$ and $\mathcal{O}$, respectively.

\subsection{Mechanisms behind missing values}
As in any incomplete-data setting, a composition is informative for inference when at least one component is observed. 
However, compositional vectors exist in the simplex and are bounded. Whenever exactly one component is unobserved, its value is uniquely determined by the observed ones, following from the constraint $\displaystyle\sum_{k=1}^p X_k = 1$. In other words, this scenario has one equation and one unknown variable. Consequently, non-identifiability arises only when at least two components are unobserved, a feature that is unique to compositional data. This also implies that the dimension for non-trivial inference of incomplete compositions must be higher than 2.

In practice, components may be unobserved for a variety of reasons, such as non-response or measurement limitations. Accordingly, we assume a missing-at-random (MAR) mechanism, together with its special case, namely missing-completely-at-random (MCAR), as a realistic and flexible approach for modelling the missingness process. Under the standard ignorability conditions for likelihood-based inference, this formulation accommodates a broad range of mechanisms giving rise to unobserved values while allowing inference to proceed without explicitly modelling the missingness mechanism. 
\subsection{\texorpdfstring{Distributions of $\bX_m$ and $\bX_o$}{Distributions of Xm and Xo}  }
In particular, under the MAR assumption, the conditional distribution of the missing parts, $\bX_m$ given the observed ones $\bX_o$ admits a closed-form expression, given in the following theorem:

\begin{theorem}
    Let $\bX \sim \bD_{\mathbb{S}_p}(\balpha)$, and partition $\bX$ and $\bmu$ according to the index sets $\mathcal{M}$ and $\mathcal{O}$ as
$
\bX =
\begin{bmatrix}
\bX_m \\
\bX_o
\end{bmatrix}
\text{ and }
\bmu =
\begin{bmatrix}
\bmu_m \\
\bmu_o
\end{bmatrix}.
$
Let $p_m = |\mathcal{M}|$, $p_o = |\mathcal{O}|$, and letting $c(\bxo)=1-\|\bxo\|_1>0$ we have:
\begin{equation*}
\label{cond}
\frac{\bX_m}{c(\bxo)}
\;\bigg|\;
\bX_o=\bxo
\sim
\mathcal{D}(\bmu^*, \gamma^*),
\end{equation*}
where 
$\bmu^* = \frac{\bmu_m}{c(\bmu_o)}$ and $\gamma^* = \frac{\gamma}{c(\mu_o)}$.
\end{theorem}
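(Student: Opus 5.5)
The plan is to work in the $\balpha$-parametrisation, where the conditional structure of the Dirichlet is classical, and translate back to $(\bmu,\gamma)$ only at the end. Write $\alpha_k=\mu_k/\gamma$, so that $\|\balpha\|_1=1/\gamma$. The simplex constraint makes $\bX$ degenerate in $\mathR^p$, so I will identify $\bX$ with its first $p-1$ coordinates. Concretely, I place one missing index last: this requires $p_m\ge 2$, which is exactly the non-trivial case discussed above. The joint density of $(\bX_o,\bX_m)$ on the open simplex is then proportional to
\[
\prod_{k\in\os}x_k^{\alpha_k-1}\prod_{k\in\ms}x_k^{\alpha_k-1},
\]
and the conditional density of $\bX_m$ given $\bX_o=\bxo$ is proportional to $\prod_{k\in\ms}x_k^{\alpha_k-1}$. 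This conditional lives on the set $\{x_k>0,\ k\in\ms,\ \sum_{k\in\ms}x_k=c(\bxo)\}$.

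Next I change variables to $\bz=\bX_m/c(\bxo)$, which maps that set onto $\mathbb{S}_{p_m}$. Using the free coordinates $z_k$, $k\in\ms$ minus the last one, the Jacobian is $c(\bxo)^{p_m-1}$. The conditional density of $\bz$ is then proportional to
\[
\prod_{k\in\ms}(c(\bxo)z_k)^{\alpha_k-1}\,c(\bxo)^{p_m-1}\;\propto\;\prod_{k\in\ms}z_k^{\alpha_k-1},
\]
because every power of $c(\bxo)$ is constant in $\bz$. Matching kernels with \eqref{pdf_dirichlet} gives $\bz\mid\bX_o=\bxo\sim\bD(\balpha_m)$, and normalisation forces the constant to be $\Gamma(\|\balpha_m\|_1)/\prod_{k\in\ms}\Gamma(\alpha_k)$. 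As a cross-check, the same conclusion follows from the gamma representation $X_k=Y_k/\sum_j Y_j$ with independent $Y_k\sim\mathrm{Gamma}(\alpha_k,1)$. The normalised vector $(Y_k/\sum_{j\in\ms}Y_j)_{k\in\ms}$ is independent of $\sum_{j\in\ms}Y_j$ and of $(Y_k)_{k\in\os}$, and $\bX_o$ and $c(\bX_o)$ are functions of these latter quantities only.

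The last step is reparametrisation. Since $\|\balpha_m\|_1=\|\bmu_m\|_1/\gamma=c(\bmu_o)/\gamma$, the new variability parameter is
\[
\gamma^*=\frac{1}{\|\balpha_m\|_1}=\frac{\gamma}{c(\bmu_o)},
\]
and the new mean is
\[
\mu^*_k=\frac{\alpha_k}{\|\balpha_m\|_1}=\frac{\mu_k}{c(\bmu_o)},\qquad k\in\ms,
\]
which is the stated $\bmu^*=\bmu_m/c(\bmu_o)$. I expect the main obstacle to be the careful handling of the degenerate support: choosing a correct $(p-1)$-dimensional chart, and verifying that the Jacobian and normalising factors involve $c(\bxo)$ only. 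Notably, the conditional parameters do not depend on $\bxo$. The MAR assumption enters only to justify that conditioning on the observed pattern does not alter this conditional law, since the missingness indicator is independent of $\bX_m$ given $\bX_o$.
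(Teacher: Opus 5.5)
Your proposal is correct and follows the paper's route. Both arguments first establish the classical fact that $\bX_m/c(\bxo)\mid\bX_o=\bxo\sim\bD(\balpha_m)$, then reparametrise using $\|\balpha_m\|_1=\|\bmu_m\|_1/\gamma=c(\bmu_o)/\gamma$. The only difference is that the paper cites Theorem 2.5 of Ng et al.\ (2011) for the first step, whereas you prove it yourself via the kernel-and-Jacobian computation, with the gamma representation as an independent and cleaner rigorous check.
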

%%%%%%%%%%%%%
\begin{proof}
First, recognise that $\frac{\bm{X}_m}{c(\bxo)} \sim \bD (\balpha_m)$. The proof of this result can be found in Theorem 2.5 of \cite[Section 2.2]{ng2011dirichlet}. All that is left is to recognise the change of variables, as required.
\end{proof}

Further, the marginal distribution of the observed parts is also tractable, given in the following theorem
\begin{theorem}
\label{marginal_dirichlet}
Let $\bX \sim \bD_{\simplex}(\bmu,\gamma)$, and partition $\bX$ and $\bmu$ according to the index sets $\mathcal{M}$ and $\mathcal{O}$ as
\[
\bX =
\begin{bmatrix}
\bX_m \\
\bX_o
\end{bmatrix}
\text{ and } 
\bmu =
\begin{bmatrix}
\bmu_m \\
\bmu_o
\end{bmatrix}.
\]
Define 
\[\bmu^v_o = \begin{bmatrix}
\bmu_o \\
 1 - \| \bmu_o\|_1
\end{bmatrix}\quad \text{and} \quad 
\bX^v_o = \begin{bmatrix}
\bX_o \\
 1 - \| \bX_o\|_1
\end{bmatrix}.\]
Then the marginal density of the observed component $\bX^v_o$ is given by
\begin{equation*}
\bX_o^v
\sim
\mathcal{D}_{\mathbb{S}_{p_o+1}}(\bmu^v_o, \gamma).
\end{equation*}
\end{theorem}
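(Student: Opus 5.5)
This is the aggregation property of the Dirichlet distribution, restated in the mean parametrisation. The plan is to work in the $\balpha$-parametrisation, where $\alpha_k=\mu_k/\gamma$, and show that collapsing all missing coordinates into a single coordinate $1-\|\bX_o\|_1=\sum_{k\in\ms}X_k$ yields a Dirichlet vector of dimension $p_o+1$. The parameters of that vector are $(\balpha_o,\sum_{k\in\ms}\alpha_k)$. Translating back will then give mean vector $\gamma(\balpha_o,\sum_{k\in\ms}\alpha_k)=(\bmu_o,1-\|\bmu_o\|_1)=\bmu_o^v$. The variability parameter is unchanged, since $\|\balpha\|_1=1/\gamma$ is preserved by aggregation.

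To prove the aggregation step I would use the gamma representation. Take independent $Y_k\sim\mathrm{Gamma}(\alpha_k,1)$ for $k=1,\dots,p$, so that $\bX\overset{d}{=}\bm{Y}/\sum_k Y_k$. Set $Y_\ms=\sum_{k\in\ms}Y_k$. By the additivity of independent gammas with a common scale, $Y_\ms\sim\mathrm{Gamma}(\sum_{k\in\ms}\alpha_k,1)$. It is also independent of $\{Y_k:k\in\os\}$, and the total sum is unchanged. Then
\[
\bX_o^v\overset{d}{=}\frac{1}{\sum_{k\in\os}Y_k+Y_\ms}\begin{bmatrix}(Y_k)_{k\in\os}\\ Y_\ms\end{bmatrix},
\]
which is by definition a Dirichlet vector with parameter $(\balpha_o,\sum_{k\in\ms}\alpha_k)$ on $\mathbb{S}_{p_o+1}$.

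Alternatively, I could integrate the density \eqref{mean_dirichlet} directly over $\bx_m$. Substituting $\bx_m=c(\bxo)\bz$ with $\bz$ on the simplex of dimension $p_m$ gives a Jacobian of $c(\bxo)^{p_m-1}$. The integral over $\bz$ is then a Dirichlet normalising constant, leaving
\[
c(\bxo)^{\sum_{k\in\ms}\alpha_k-1}\prod_{k\in\os}x_k^{\alpha_k-1}.
\]
The remaining gamma-function constants combine into $\Gamma(1/\gamma)\big/\bigl[\prod_{k\in\os}\Gamma(\mu_k/\gamma)\,\Gamma((1-\|\bmu_o\|_1)/\gamma)\bigr]$. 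This direct computation is consistent with the conditional result stated just before the theorem and could be cited in its place.

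The main obstacle is bookkeeping rather than substance. The conversion back to $(\bmu,\gamma)$ must check that $1-\|\bmu_o\|_1=\gamma\sum_{k\in\ms}\alpha_k$, which follows from $\bmu\in\simplex$. One also has to be careful with the degenerate case $p_m=1$: there no integration is needed, and the statement holds trivially because $\bX_o^v$ is just a reordering of $\bX$.
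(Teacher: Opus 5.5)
Your proposal is correct. The paper states this theorem without any proof. The closest support is the preceding conditional result, whose proof cites the $\balpha$-parametrised statement in \cite{ng2011dirichlet} and then reparametrises. The intended argument is therefore the standard aggregation property of the Dirichlet, translated via $\alpha_k=\mu_k/\gamma$ and $\|\balpha\|_1=1/\gamma$, which is exactly your outline.

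What you add is an actual proof of the aggregation step. The gamma representation is the shorter route: it needs no Jacobian bookkeeping and covers $p_m=1$ without a separate case. Your direct integration, with Jacobian $c(\bxo)^{p_m-1}$ over the $p_m-1$ free missing coordinates, has an extra merit. It factorises the joint density into the marginal of $\bX_o$ times the $\mathcal{D}(\balpha_m)$ density of $\bX_m/c(\bxo)$, so it proves the paper's preceding conditional theorem at the same time.

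One small point: the statement tacitly needs $p_m\geq 1$. For a fully observed vector the appended coordinate $1-\|\bX_o\|_1$ is identically zero and $\bmu_o^v$ has a zero entry, so $\mathcal{D}_{\mathbb{S}_{p_o+1}}(\bmu_o^v,\gamma)$ is not defined. You single out $p_m=1$ but should exclude $p_m=0$ explicitly. The paper glosses over the same issue when it writes $\bx_{i,o}^v$ for every row in the observed log-likelihood.
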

% Consider noise on Sp to perofrm a sensitivity analysis....motivate
\subsection{\texorpdfstring{Noise on $\simplex$}{Noise on simplex}}
%\textcolor{magenta}{When using the CN we do not assume the noise is uniform, that would be a gross outlier. We need to talk about this.}
The concept of an outlier on the simplex is different from Euclidean space. In the latter space, an outlier is characterised by its proximity to the majority of the data. However, the simplex is bounded. Thus, outliers are relative to the positions of the clusters and the simplex. Consequently, the notions around noise on Euclidean space must be appropriated for the simplex. A suitable starting point would be to consider what noise on the simplex looks like. On the real space, it is often observations generated from some uniform distribution. However, observations from a uniform distribution on the Euclidean space do not necessarily imply they are uniformly distributed on the simplex.

Contaminating observations with uniform noise provides a useful benchmark for assessing the sensitivity of contaminant models. Such contamination can produce observations that range from mild outliers, which lie relatively close to the underlying clusters, to gross outliers located far from the bulk of the data, while some contaminated observations may remain hidden within the clusters themselves. Thus, uniform contamination provides a general scenario in which multiple forms of atypical observations are present simultaneously, allowing the performance and sensitivity of contaminant models to be evaluated under varying types of contamination.

\citep{spacings} shows that uniformly distributed observations on the simplex can be created from uniform observations on the Euclidean space, through their spacings, defined as follows:
\begin{definition}[Spacings]
\label{spacing_def}
    Suppose $U_1,\dots,U_{p-1}$ are independent observations from a uniform distribution on the unit interval, denoted as $U_k \sim unif(0,1)$. Denote the order statistics as $U^{(0)}\leq U^{(1)}\leq,\dots, \leq U^{(p-1)} \leq U^{(p)}$, where $U^{(0)} = 0$ and $U^{(p)}=1$. The elements of the vector $\bm S$ are the successive differences between the order statistics. That is, $S_k = U^{(k)} - U^{(k-1)}$ for $k=1,\dots,p$.
\end{definition}
From Definition \ref{spacing_def}, $\bm S \in \simplex$. The next theorem then shows that $\bm S$ is uniformly distributed on $\simplex$
\begin{theorem}
    The spacings, $\bm S$, from a sample of $p-1$ independent observations from a $unif(0,1)$ distribution are uniformly distributed on $\simplex$.
    \begin{proof}
        It is sufficient to see that the spacings are dependent on the ordering of the sample from a $unif(0,1)$ distribution. Thus, its density is determined on the number of ways $U_1,\dots,U_p$ can be ordered, which is $(p-1)!$ ways (see \citep{spacings}). Thus, the density of $\bm S$ is given as 
        \begin{align}
            f(\bm S ) = \begin{cases} (p-1)! & \bm S \in \simplex \\ 0 & \mathrm{otherwise.}  \end{cases}
        \end{align}
    \end{proof}
\end{theorem}
Lastly, we contextualise the distribution of $\bm S$ within the distributions of the simplex:
\begin{theorem}
   Let $\bm S$ be a vector of spacings from a uniformly distributed random sample of size $p-1$. Then $\bm S \sim \bD_{\simplex}\left(\frac{1}{p}\bm 1; \frac{1}{p} \right)$, where $\bm 1$ is a $p\times 1$ vector whose elements are one.
    \begin{proof}
        For $\bm S \in \simplex$, its density is:
        \begin{align*}
            (p-1)! = \Gamma\left( p \right) = \frac{\Gamma\left( \frac{1}{\frac{1}{p} } \right)}{\displaystyle\prod_{k=1}^p\Gamma \left( \frac{\frac{1}{p} }{ \frac{1}{p}}\right) }\prod_{k=1}^p s_k^{\frac{\frac{1}{p} }{\frac{1}{p}} - 1},
        \end{align*}
        which is the density of a $ \bD_{\simplex}\left(\frac{1}{p}\bm 1; \frac{1}{p} \right)$ distribution.
    \end{proof}
\end{theorem}

\section{Inference}
\label{inference}
Consider a random sample $\mathcal{X} = \{ \bx_1,\dots, \bx_n\}$ generated by \eqref{fmm}, whose vectors are potentially incomplete. That is, each observation $\bx_i$ can be partitioned into missing and observed subvectors,
$
\bx_i =
\begin{bmatrix}
\bx_{i,m}\\
\bx_{i,o}
\end{bmatrix},
$
where the subscripts $m$ and $o$ refer to the missing and observed components, respectively. Under the MAR mechanism, the maximum likelihood estimator (MLE) $\hat{\bpsi}$ is the optimiser of the observed log-likelihood, given as
\begin{align}
\label{observed_ll}
    \ell_o(\bpsi; \mathcal{X}_o) = \sum_{i=1}^n \ln \sum_{g=1}^G \pi_g f_{\mathcal{CD}}(\bx_{i,o}^v; \bmu_{i,og}^v, \gamma_g, \eta_g, \varepsilon_g),
\end{align}
where $\mathcal{X}_o = \{ \bxoi: i=1,\dots,n \}$. In preparation for section \ref{estimation}, we establish existence and uniqueness of the estimates for the case $G=1$, and note the restrictions sufficient for model identifiability, i.e. under which circumstances does existence and uniqueness apply for $G>1$.
\subsection{Existence and uniqueness}
Under the case $G=1$, \eqref{observed_ll} simplifies to 
\begin{align}
\label{observed_ll_1}
    \ell_o(\bpsi; \mathcal{X}_o) = \sum_{i=1}^n \ln f_{\mathcal{CD}}(\bx_{i,o}^v;\bmu_{i,o}^v, \gamma, \eta, \varepsilon).
\end{align}
Here, $\ell_o: \bm{\Psi} \rightarrow \mathR$ is defined on the parameter space $\bm{\Psi}$, which is the Cartesian product $\simplex^{\epsilon} \times \mathR_+^{\epsilon} \times [1+ \epsilon,\infty) \times [0+\epsilon,1-\epsilon]$ where $\simplex^{\epsilon}=\{\bmu \in \simplex: \mu_k \geq \epsilon,k=1,\dots,p\}$, and $\mathR_+^{\epsilon} = \{\gamma \in \mathR: \gamma\geq \epsilon\}$, for some fixed $\epsilon>0$. 

To prove existence, we must show that the bounded parts of $\Psi$, namely $\simplex^{\epsilon} \times [0+\epsilon,1-\epsilon]$ are compact, and ensure that $\ell_o$ does not increase arbitrarily as $\gamma$ and $\eta$ increases. Now, since $\simplex$ is a convex set, $\simplex^{\epsilon}$ is a closed and bounded subset of $\simplex$. Thus, $\simplex^{\epsilon} \times [0+\epsilon,1-\epsilon]$ is a compact set.

For the case $ \begin{bmatrix} \gamma \\  \eta \end{bmatrix} \rightarrow \begin{bmatrix} \epsilon \\ 1+ \epsilon \end{bmatrix}$, $\ell_o$ is defined and bounded. All that is left, is to consider the unbounded cases of  $ \begin{bmatrix} \gamma \\  \eta \end{bmatrix}$. This is handled by the following lemma.
\begin{lemma}
\label{coercive_lemma} $-\ell_o(\bpsi; \mathcal{X}_o)$ is coercive on $\bm{\Psi}$.
\begin{proof} See Appendix \ref{coercive_proof} \end{proof}
\end{lemma}
Lemma \ref{coercive_lemma} permits the use of Theorem \ref{existence_theorem}, to conclude that there exist $\hat{\bmu}, \hat{\gamma}, \hat{\eta},$ and $\hat{\varepsilon}$ that are optimisers of $\ell_o$. To prove uniqueness, we aim to show that \eqref{observed_ll_1} is strictly concave on $\bm{\Psi}$. The strategy employed is to first demonstrate blockwise concavity. We begin with $\bmu$, while fixing $\gamma,\eta,\varepsilon$.
Consider
\begin{align}
\frac{\partial f_{\mathcal D}(\bx_{i,o}^v;\bmu_{i,o},\gamma) }{\partial\mu_k}
&=  -\frac{ \frac1\gamma\Gamma\!\left(\frac1\gamma\right) }{\displaystyle\prod_{j\in\os_i} \Gamma\!\left(\frac{\mu_j}{\gamma}\right)}
\displaystyle \prod_{j\in\os_i} x_{ij}^{\frac{\mu_j}{\gamma}-1}
 \frac{\partial}{\partial\mu_k} \Gamma\!\left(\frac{\mu_k}{\gamma}\right)
+\frac{ \frac1\gamma\Gamma\!\left(\frac1\gamma\right)} {\displaystyle\prod_{j\in\os_i} \Gamma\!\left(\frac{\mu_j}{\gamma}\right)}
\displaystyle \prod_{j\in\os_i} x_{ij}^{\frac{\mu_j}{\gamma}-1}\ln x_{ik} \nonumber
\\
&= \frac{\Gamma\!\left(\frac1\gamma\right)}{ \displaystyle\prod_{j\in\os_i} \Gamma\!\left(\frac{\mu_j}{\gamma}\right)}
\displaystyle\prod_{j\in\os_i} x_{ij}^{\frac{\mu_j}{\gamma}-1} 
\left( -\frac{ \frac{ \partial}{ \partial_{\mu_k} }  \Gamma \left(\frac{\mu_k}{\gamma} \right)} {\gamma\Gamma(\frac{\mu_k}{\gamma})} +\frac{\ln x_{ik}}{\gamma}
\right)\nonumber
\\
% &= \frac{\Gamma\!\left(\frac1\gamma\right)} {\displaystyle\prod_{j\in\os_i}\Gamma\!\left(\frac{\mu_j}{\gamma}\right)}
% \displaystyle\prod_{j\in\os_i} x_{ij}^{\frac{\mu_j}{\gamma}-1} \left( -\psi\!\left(\frac{\mu_k}{\gamma}\right) +\ln x_{ik}^{\frac1\gamma}
% \right)\nonumber\\
&= - f_{\mathcal D}(\bx_{i,o}^v;\bmu_{i,o}^v,\gamma)\delta(\mu_k), \nonumber
\end{align}
where $\delta(\mu_k) = \varphi\!\left(\frac{\mu_k}{\gamma}\right) -\frac{1}{\gamma}\ln x_{ik}$, with $\varphi(\cdot)$ denoting the digamma function. The Hessian, $\bm H_{i,o}$ of $f_{\mathcal D}(\cdot)$ with respect to $\bmu$ can be written as:
\[
%\frac{\partial^2 f_{\mathcal D}(\bx_{i,o}^v;\bmu_{i,o},\gamma)}{\partial\mu_\ell\partial\mu_k}
\bm H^{(lk)}_{i,o}
=
\begin{cases}
-\dfrac{\partial f_{\mathcal D}(\bx_{i,o}^v;\bmu_{i,o}^v,\gamma)}{\partial\mu_\ell}\delta(\mu_k),
&
\ell\neq k,
\\[1.2em]
-\dfrac{\partial f_{\mathcal D}(\bx_{i,o}^v;\bmu_{i,o}^v,\gamma)}{\partial\mu_\ell}\delta(\mu_k) -
\dfrac{f_{\mathcal D}(\bx_{i,o}^v;\bmu_{i,o}^v,\gamma) }\gamma \varphi^{(1)} \!\left( \dfrac{\mu_k}{\gamma} \right),
&
\ell=k,
\end{cases}
\]
where $\varphi^{(1)}(\cdot)$ denotes the trigamma function. With respect to $\bmu$, if we can show that the Hessian matrix of $\ell_o$, say $\bm \nabla^2 \ell_o$, is negative definite, it can be concluded that $\ell_o$ is strictly concave. Since $\ell_o$ is a sum of terms involving $f_{\mathcal D}(\cdot)$, we can show each $\bm H_{i,o}$ is negative definite. 

\begin{lemma}
\label{neg_def_lemma}
    For observed sample $\mathcal{\bm X}_o$ of size $n \in \mathbb N$, the matrix $\bm H _{i,o}$ is negative definite $\forall~i=1,\dots,n$.
\end{lemma}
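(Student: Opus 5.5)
The plan is to exploit the rank-one-plus-diagonal structure of $\bm H_{i,o}$ that is already visible in the displayed entries. Write $f_i = f_{\mathcal D}(\bx_{i,o}^v;\bmu_{i,o}^v,\gamma) > 0$ and $\bm\delta_i = [\delta(\mu_k)]_{k\in\os_i}$. Substituting $\partial f_i/\partial\mu_\ell = -f_i\,\delta(\mu_\ell)$ into both cases of $\bm H^{(lk)}_{i,o}$ gives the compact form
\[
\bm H_{i,o} = f_i\Big(\bm\delta_i\bm\delta_i^{\top} - \tfrac{1}{\gamma}\bm D_i\Big),
\qquad
\bm D_i = \mathrm{diag}\Big(\varphi^{(1)}\big(\tfrac{\mu_k}{\gamma}\big)\Big)_{k\in\os_i}.
\]
Here $f_i>0$, and $\bm D_i$ is positive definite because the trigamma function is strictly positive on $(0,\infty)$ and $\mu_k/\gamma>0$ on $\bm\Psi$. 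Hence negative definiteness of $\bm H_{i,o}$ is equivalent to negative definiteness of $\bm\delta_i\bm\delta_i^{\top}-\gamma^{-1}\bm D_i$.

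Next, reduce this to a scalar condition by congruence. Multiplying on both sides by $\gamma^{1/2}\bm D_i^{-1/2}$ preserves the sign of the quadratic form, and turns the matrix into $\bm u_i\bm u_i^{\top} - \bm I$ with $\bm u_i = \gamma^{1/2}\bm D_i^{-1/2}\bm\delta_i$. The eigenvalues of $\bm u_i\bm u_i^{\top} - \bm I$ are $-1$, with multiplicity $p_o-1$, and $\|\bm u_i\|^2-1$. Equivalently, by the matrix determinant lemma, $\bm H_{i,o}$ is negative definite if and only if
\[
\gamma\sum_{k\in\os_i}\frac{\delta(\mu_k)^2}{\varphi^{(1)}(\mu_k/\gamma)} < 1 .
\]
The whole lemma therefore reduces to verifying this inequality for every $i$ and every parameter value in $\bm\Psi$.

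The main obstacle is establishing this inequality. Since $\delta(\mu_k)=\varphi(\mu_k/\gamma)-\gamma^{-1}\ln x_{ik}$ depends on the data, it is not automatically small. My first attempt would use standard bounds for $a=\mu_k/\gamma$:
\[
\varphi^{(1)}(a) > \frac{1}{a} + \frac{1}{2a^2}
\qquad\text{and}\qquad
\ln a - \frac{1}{a} < \varphi(a) < \ln a - \frac{1}{2a}.
\]
With these I would bound $\gamma\,\delta(\mu_k)^2/\varphi^{(1)}(\mu_k/\gamma)$ by roughly $\mu_k\big(\ln(\mu_k/x_{ik})\big)^2/\gamma$-type terms. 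I would then use the restrictions defining $\bm\Psi$, namely $\mu_k\ge\epsilon$, $\gamma\ge\epsilon$, and $\bx\in\simplex$ with $x_{ik}$ bounded away from $0$ by the data, to try to keep the sum below one.

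If a uniform bound over $\bm\Psi$ fails, the fallback is to state the condition explicitly. That is, I would show the inequality holds on the region of $\bm\Psi$ where the observed compositions lie close to the mean, which covers the neighbourhood of the maximiser guaranteed by Lemma \ref{coercive_lemma} and Theorem \ref{existence_theorem}. There the score components $\delta(\mu_k)$ are small, so the rank-one term is dominated by $\gamma^{-1}\bm D_i$. Applying this for each $i=1,\dots,n$ then yields the conclusion of the lemma for all $\bm H_{i,o}$.
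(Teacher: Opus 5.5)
\textbf{Where the gap is.} Your reduction is correct, and it is the same route the paper takes. The paper writes $\bm H_{i,o}=\bm D+\nabla_{\bmu}f_{\mathcal D}\,\bm\xi^{\top}$, applies the matrix determinant lemma, and concludes from positivity of the resulting scalar factor. Computed correctly, with $\bm\xi=\nabla_{\bmu}f_{\mathcal D}/f_i$ rather than $C=f_{\mathcal D}$ as written there, that factor is $1-\gamma\sum_{k\in\os_i}\delta(\mu_k)^2/\varphi^{(1)}(\mu_k/\gamma)$, which is exactly your condition. The gap is that this inequality is never established. You leave it as the ``main obstacle'', and the paper simply asserts it for all $\bmu\in\simplex^{\epsilon}$. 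It cannot be established, because it is false on $\bm\Psi$.

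\textbf{Why the inequality fails.} Every summand is nonnegative, so a single large summand suffices. Take $\gamma=0.02$ and an observed coordinate with $x_{ik}=\mu_k=0.13$, which is the regime used in Section~\ref{simulation}. Then $\delta(\mu_k)=\varphi(6.5)-50\ln 0.13\approx 1.79+102.01\approx 103.8$ and $\varphi^{(1)}(6.5)\approx 0.166$. The summand is therefore about $0.02\times 103.8^{2}/0.166\approx 1.3\times 10^{3}$. More generally, for any data the summand behaves like $\mu_k(\ln x_{ik})^2/\gamma^2$ as $\gamma\to 0$. It behaves like $\gamma$ as $\gamma\to\infty$, since $\varphi(a)\sim -1/a$ and $\varphi^{(1)}(a)\sim 1/a^2$.

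\textbf{Structural reason.} The identity $\bm H_{i,o}=f_i\bigl(\nabla_{\bmu}\ln f_i\,\nabla_{\bmu}\ln f_i^{\top}+\nabla^2_{\bmu}\ln f_i\bigr)$ shows that $\bm H_{i,o}$ is the Hessian of the density, not of its logarithm. $f_{\mathcal D}$ is log-concave in $\bmu$: $\nabla^2_{\bmu}\ln f_i=-\gamma^{-1}\bm D_i\prec\bm 0$ is the statement that actually holds. But $f_{\mathcal D}$ is not concave wherever the squared score outweighs the curvature, and this does not depend on how $\delta$ is scaled. So no digamma or trigamma bounds can keep your sum below one over $\bm\Psi$.

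\textbf{The fallback does not rescue it.} First, a local statement near the maximiser is not the lemma, and the subsequent uniqueness theorem for $\hat{\bmu}$ invokes the lemma on all of $\simplex^{\epsilon}$. Second, its premise is wrong. $\delta(\mu_k)$ is not centred, so it need not be small when $\bx_{i,o}$ sits at the mean; the example above has $x_{ik}=\mu_k$. Moreover, stationarity at a maximiser only forces a weighted sum over $i$ of the per-observation scores to vanish, not each individual score.
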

\begin{proof}
    See Appendix \ref{neg_def_proof}.
\end{proof}
Using Lemma \ref{neg_def_lemma} we can determine the strict concavity of $\ell_o$ and thus conclude a unique maximiser $\hat{\bmu}$ exists, through the following theorem.
\begin{theorem}
\label{unique_mu}
    For fixed $\gamma, \eta$ and $\varepsilon$, the function $g^{(1)}(\bmu; \gamma, \eta,\varepsilon, \mathcal {\bm X}_o) = \ell_o(\bm\Psi;\mathcal{\bm X}_o)$ as a function of $\bmu$, has a unique maximiser, say $\hat{\bmu}$, where $\ell_o$ is as given in \eqref{likelihood_fmm}, provided that the observed sample spans the parameter space. 
    \begin{proof}
        From Lemma \ref{neg_def_lemma}, $f_\mathcal D(\cdot)$ is strictly concave on $\simplex^\epsilon$ with respect to $\bmu$. $f_{\bD}(\cdot)$ is strictly log-concave on $\simplex^\epsilon$, and $\ell_o$ is concave on $\simplex^\epsilon$. Now, define $\bm O_i$ to be the projection of $\bmu$ onto its observed space, i.e. $\bm O_i \bmu = \bmu_{i,o}$. Then, for any $ \bz\neq0$, 
        \[ \bz^\top \nabla^2\ell_o \bz = \sum_{i=1}^{n} (\bm O_i\bz) ^\top  \bm H_{i,o}(\bm O_i \bz )= 0 \] 
        will only be possible if, $\forall~\bmu\in\simplex^\epsilon$ the observed sample fails to span the parameter space. Therefore
        \[ \bz^\top \bm \nabla^2\ell_o \bz \prec \bm 0. \]
Since a maximiser of $\ell_o$, namely $\hat{\bmu}$ exists, and function  $g^{(1)}(\bmu; \gamma, \eta,\varepsilon, \mathcal {\bm X}_o)$ is strictly concave on $\simplex^\epsilon$, $\hat{\bmu}$ is unique.
    \end{proof}
\end{theorem}
Theorem \ref{unique_mu} implies that any maximisers $\hat{\gamma}, \hat{\eta},$ and $\hat{\varepsilon}$ thus share $\hat \bmu$. This means $\ell_o$ will have maximisers that lie on the surface $g^{(1)}(\hat \bmu; \gamma, \eta,\varepsilon, \mathcal {\bm X}_o)$. 

Now $\ell^o$ is twice-smooth, hence focus on\[g^{(2)}_i( \gamma; \eta, \varepsilon, \bx_{i,o}, \hat \bmu) = \ell_{o,i} \left( \bx_{i,o}; \hat \bmu, \gamma, \eta \right)
= \sum_{i=1}^n\ln\!\left[ (1-\varepsilon)e^{\phi_1} + \varepsilon e^{\phi_2} \right],\]
where
\[
\phi_1 = \ln\Gamma\!\left(\frac{1}{\gamma}\right) - \sum_{k\in\os_i} \ln\Gamma\!\left(\frac{\hat\mu_k}{\gamma}\right) + \sum_{k\in\os_i} \ln x_{ik}
\left( \frac{\hat\mu_k}{\gamma}-1 \right), \]

and

\[
\phi_2= \ln\Gamma\!\left(\frac{1}{\eta\gamma}\right) - \sum_{k\in\os_i}\ln\Gamma\!\left( \frac{\hat\mu_k}{\eta\gamma} \right) + \sum_{k\in\os_i} \ln x_{ik}\left( \frac{\hat\mu_k}{\eta\gamma}-1 \right).
\]

Then \[ 
\dfrac{\partial g^{(2)}_i}{\partial \gamma} = \frac{ (1-\varepsilon)e^{\phi_1}}{f_{\bCD} \left( \bx_{i,o}; \hat\bmu, \gamma, \eta \right) } \frac{\partial\phi_1}{\partial\gamma} + \frac{\varepsilon e^{\phi_2} }{
f_{\bCD} \left( \bx_{i,o}; \hat\bmu, \gamma, \eta \right) } \frac{\partial\phi_2}{\partial\gamma}.
\]
Let $t=\frac{1}{\gamma}$. The sign of $\dfrac{\partial g^{(2)}_i}{\partial \gamma}$ depends on $\frac{\partial\phi_1}{\partial\gamma}$ and $\frac{\partial\phi_2}{\partial\gamma},\forall\ t > 0$ where,
\[
\frac{\partial^2\phi_1}{\partial t^2}
= \varphi^{(1)}(t) - \sum_{k\in\os_i} \hat\mu_k^2 \varphi^{(1)}(\hat\mu_k t).
\]
and
\[
\frac{\partial^2\phi_2}{\partial t^2} = \frac{1}{\eta^2} \left[ \varphi^{(1)}\!\left(\frac{t}{\eta}\right) - \sum_{k\in\os_i} \hat\mu_k^2 \varphi^{(1)}\!\left(\frac{\hat\mu_k t}{\eta} \right) \right].
\]
We now show that $\frac{\partial\phi_1}{\partial\gamma}$ and $\frac{\partial\phi_2}{\partial\gamma}$ are strictly decreasing. We need the following lemma:
\begin{lemma}
    The function $h(x) = x\varphi^{(1)}(x)$ is strictly decreasing on $x>0$.
    \begin{proof}
    The series expansion of $\varphi^{(1)}$ is \citep{trigamma}:
    \begin{align*}
        \varphi^{(1)}(x) = \sum_{j=0}^{\infty} \frac{1}{(j + x)^2 } \implies h(x) =  \sum_{j=0}^{\infty} \frac{x}{(j + x)^2 } 
    \end{align*}
    Observe that for $y = x + h$ where $h>0$, so that $x<y$: 
    \begin{align*}
        h(y) - h(x) &= \sum_{j=0}^{\infty} \left[ \frac{y}{(j + y)^2}  - \frac{x}{(j + x)^2} \right] \\
                    &= \sum_{j=0}^{\infty} \left[ \frac{x^2y - xy^2}{(x+j)^2(y+j)^2} \right]\\
                    &= \sum_{j=0}^{\infty} \left[ \frac{x^2(x+h) - x(x+h)^2}{(x+j)^2(y+j)^2} \right]\\
                    &= \sum_{j=0}^{\infty} \left[ - \frac{xh^2}{(x+j)^2(y+j)^2} \right]\\
                    &< 0 ~\forall h,x>0.
    \end{align*}
    \end{proof}
\end{lemma}

Thus, for any $t>0$ we have:
\begin{align*}
    \mu_k t < t &\implies t \varphi^{(1)}(t) < \mu_k t \varphi^{(1)}(\mu_kt)\\
                &\implies \varphi^{(1)}(t) < \mu_k \varphi^{(1)}(\mu_kt).\\
                \text{ Multiplying by $\mu_k>0$ on both sides}\\
\mu_k \varphi^{(1)}(t) < \mu_k^2 \varphi^{(1)}(\mu_kt) &\implies \sum_{k\in \os_i} \mu_k \varphi^{(1)}(t) < \sum_{k\in \os_i}  \mu_k^2 \varphi^{(1)}(\mu_kt).
\end{align*}
Recall from Theorem \ref{marginal_dirichlet}, the mean, $\bm \mu^v_o$, is an element of $\mathbb{S}_{p_o}$. Thus, $ \displaystyle\sum_{k\in \os_i} \mu_k =1$, $\frac{\partial\phi_1}{\partial\gamma}$ and $\frac{\partial\phi_2}{\partial\gamma}$ are strictly decreasing on $(0,\infty)$. In other words, $\dfrac{\partial g^{(2)}_i}{\partial \gamma}$ changes sign at most once. Hence if a maximiser for $\gamma$ exists, it is unique. This brings us to the final theorem of the section.
\begin{theorem}
    \label{all_unique}
    The observed log-likelihood function $\ell_o$ defined on $\bm \Psi$ has maximisers $\hat{\bmu},\hat{\gamma},$ $\hat{\eta}$, and $\hat \varepsilon$ and these maximisers are unique.
    \begin{proof}
        From Theorem \ref{existence_theorem}, maximisers of $\ell_o$ exist. By Theorem \ref{unique_mu}, $\hat{\bmu}$ exists, for fixed $\gamma, \eta$, and $\varepsilon$. Since $\hat{\gamma}$ exists, and the derivative of $\ell_o$ with respect to $\gamma$ is strictly decreasing, we have that $\hat{\gamma}$ is unique. Notice that, inductively, $\hat \eta$ exists and is unique by analogous argument to $\hat \gamma$. Lastly, it is shown that the derivative of the observed log-likelihood with respect to $\varepsilon$ is strictly decreasing, implying at most one solution exists, say $\hat \varepsilon$. Combined with Theorem \ref{existence_theorem}, we thus have $\hat \varepsilon$ is unique.
    \end{proof}
\end{theorem}
Lastly, we consider the case for $G>1$. Uniqueness of a solution cannot be guaranteed in general. \citep{identifiability} shows that if $G<p$, finite mixtures of density \eqref{mean_dirichlet} are identifiable, up to permutation. %Existence of a maximiser can still be guaranteed, since a 

\subsection{EM-based algorithm}
\label{estimation}
A random sample $\mathcal{X}$ that is generated by \eqref{fmm} is incomplete for three reasons: (1) some parts of the row may be missing, (2) it is not known which of the $G$ components the observation is generated from, and (3) it is not known whether said row is an outlier point or not. There is a latent cluster membership vector $\bz_i = \begin{bmatrix} z_{i1},\dots, z_{iG}\end{bmatrix}^{\top}$ where $z_{ig} = 1$ if $\bx_i$ is generated from the $g^{th}$ component of model \eqref{fmm} and 0 otherwise, subject to the constraint $||\bz_i||_1 =1 $, for $i=1,\dots,n$. Similarly, a latent vector $\bv_{ig} = \begin{bmatrix} v_{i1},\dots, v_{iG}\end{bmatrix}^{\top}$ such that $v_{ig}$ assumes the value 1 when $\bx_i$ is an outlier point and 0 otherwise, conditioned on whether $\bx_i$ belongs to the $g^{th}$ component. Let $\ms_i \subset \{1,\dots,p\}$ and $\os_i$ denote the index set of missing and observed parts of $\bx_i$, respectively. Letting $\mathcal{Z} = \{\bz_i: i =1,\dots,n \}$ and $\mathcal{V}= \{\bv_i: i =1,\dots,n \}$, the complete likelihood is therefore given as

\begin{align} 
\tilde{\mathcal{L}}(\bpsi; \mathcal{X}, \mathcal{V}, \mathcal{Z} ) 
= \prod_{i=1}^n \prod_{g=1}^G &\left\{ \pi_g 
\left[\varepsilon_g f_{\bD}(\bx_i;\bmu_g, \eta_g\gamma_g) \right]^{v_{ig}} 
\left[(1-\varepsilon_g) f_{\bD}(\bx_i;\bmu_g, \gamma_g) \right]^{1-v_{ig}} \right\}^{z_{ig}} 
 \nonumber\\
= \prod_{i=1}^n \prod_{g=1}^G & \left\{ 
\pi_g
\left[ \varepsilon_g \frac{\Gamma\left( \frac{1}{\eta_g\gamma_g} \right) }{\displaystyle\prod_{k=1}^p \Gamma\left(\frac{\mu_{kg}}{\eta_g\gamma_g}\right) }
\displaystyle\prod_{k \in \ms_i} x_k^{\frac{\mu_{kg}}{\eta_g\gamma_g} -1} \displaystyle\prod_{k \in \os_i} x_k^{\frac{\mu_{kg}}{\eta_g \gamma_g} -1} \right]^{v_{ig}} \right. \nonumber\\
&\qquad\left.
\left[(1 - \varepsilon_g) \frac{\Gamma\left( \frac{1}{\gamma_g} \right) }{\displaystyle\prod_{k=1}^p\Gamma\left(\frac{\mu_{kg}}{\gamma_g}\right) } \displaystyle\prod_{k \in \ms_i} x_k^{ \frac{\mu_{kg}}{\gamma_g} -1} \displaystyle\prod_{k \in \os_i} x_k^{\frac{\mu_{kg}}{\gamma_g} -1} \right] ^{1-v_{ig}} \right\} ^{z_{ig}}. 
\label{likelihood_fmm} 
\end{align}
\subsubsection{E-step}
Letting $\tilde \ell _c$ denote the natural log of \eqref{likelihood_fmm}, the E-step computes $\tilde{Q}(\bpsi ) = \mathbb{E}\left[\tilde \ell _c(\bpsi; \mathcal{X}, \mathcal{V}, \mathcal{Z} )\big|\bpsi^{(r)}\right]$, using the parameter updates at the $r^{th}$ iteration, namely $\bm{\bpsi}^{(r)}$, $\tilde{Q}(\bpsi )$ depends on the following expectations:
\begin{align}
    z^{(r)}_{ig} &:= \mathbb{E}\left[ Z_{ig}        | \bm \psi^{(r)}, \bx_{i,o} \right], \quad
    v^{(r)}_{ig} := \mathbb{E}\left[ V_{ig}         | \bm \psi^{(r)}, \bx_{i,o}, z_{ig} = 1\right] , \label{latent} \\  
    \evg         &:= \mathbb{E}\left[ \ln \bX_{i,m} | \bm \psi^{(r)}, \bx_{i,o}, v_{ig}=0, z_{ig} =1\right] ,\quad \text{and } \label{missing_ref} \\
    \cvg         &:= \mathbb{E}\left[ \ln \bX_{i,m} | \bm \psi^{(r)}, \bx_{i,o}, v_{ig}=1, z_{ig} =1\right] \label{missing_cont}
\end{align}

The expressions for the expectations in \eqref{latent} are:
\begin{align*}
    z^{(r)}_{ig} = \mathbb{P}(Z_{ig} =1|\bm\psi^{(r)}, \bx_{i,o}) &= \frac{\pi_g f_{\mathcal{CD}}\left( \bx_{i,o}^v; \bmu_{i,o,g}^{v~(r)}, \gamma_g^{(r)} \eta_g^{(r)}, \varepsilon_g^{(r)} \right) }
    { \displaystyle\sum_{j=1}^G \pi_jf_{\mathcal{CD}}\left( \bx_{i,o}^v; \bmu_{i,o,j}^{v~(r)}, \gamma_j^{(r)} \eta_j^{(r)}, \varepsilon_j^{(r)} \right)}\\
    v^{(r)}_{ig} = \mathbb{P}(V_{ig} =1|\bm\psi^{(r)}, \bx_{i,o}, z_{ig} = 1) &= 
    \frac{ \varepsilon^{(r)}_gf_{\mathcal{D}}\left( \bx_{i,o}^v; \bmu_{i,o,g}^{v~(r)}, \eta_g^{(r)}\gamma_g^{(r)} \right) }
    {  f_{\mathcal{CD}}\left( \bx_{i,o}^v; \bmu_{i,o,g}^{v~(r)}, \gamma_g^{(r)}, \eta_g^{(r)}, \varepsilon_g^{(r)} \right)}
\end{align*}
By Theorem \ref{cond}, the expected values in \eqref{missing_ref} and \eqref{missing_cont} have the following expression:
\begin{align*}
        \evg &=  \ln \left( 1 - ||\bx_{i,o}||_1 \right)  +  \varphi\left(\frac{\mu_{kg}^{(r)}}{\gamma^{(r)}_g }  \right) - \varphi\left(\frac{ \left|\left|\bmu_{i,m,g}^{(r)}\right|\right|_1 } {  \gamma^{(r)}_g } \right),\\
        \cvg &=  \ln \left( 1 - ||\bx_{i,o}||_1 \right)  +  \varphi\left(\frac{\mu_{kg}^{(r)}}{\gamma^{(r)}_g\eta^{(r)}_g }  \right) - \varphi\left(\frac{ \left|\left|\bmu_{i,m,g}^{(r)}\right|\right|_1 } { \gamma^{(r)}_g \eta^{(r)}_g } \right),
\end{align*}
where $ \varphi(\cdot)$ denotes the digamma function.
Thus, $\tilde{Q}(\bpsi )$ is given as:
\begin{align}
\label{q_psi}
    \tilde{Q}(\bpsi ) & = \sum_{i=1}^n \sum_{g=1}^G z ^{(r)}_{ig}\ln\pi^{(r)}_g + \sum_{i=1}^n \sum_{g=1}^G z^{(r)}_{ig}v^{(r)}_{ig}\ln\varepsilon^{(r)}_g + \sum_{i=1}^n \sum_{g=1}^G z^{(r)}_{ig}\left(1-v^{(r)}_{ig}\right)\ln\left(1-\varepsilon^{(r)}_g\right) \nonumber\\
    &+\sum_{i=1}^n \sum_{g=1}^G z^{(r)}_{ig}v^{(r)}_{ig} \ln\Gamma\left( \frac{1}{\eta^{(r)}_g\gamma^{(r)}_g} \right)
    +\sum_{i=1}^n \sum_{g=1}^G z^{(r)}_{ig} \left(1-v^{(r)}_{ig}\right) \ln\Gamma\left( \frac{1}{\gamma^{(r)}_g} \right)\nonumber\\
    &- \sum_{i=1}^n \sum_{g=1}^G \sum_{k=1}^p z^{(r)}_{ig}v^{(r)}_{ig}\ln \Gamma\left(\frac{\mu^{(r)}_{kg}}{\eta^{(r)}_g\gamma^{(r)}_g}\right)
    - \sum_{i=1}^n \sum_{g=1}^G \sum_{k=1}^p z_{ig}(1-v_{ig})\ln \Gamma\left(\frac{\mu_{kg}}{\gamma_g}\right) \nonumber\\
    &+\sum_{i=1}^n \sum_{g=1}^G \sum_{k \in \ms_i} z^{(r)}_{ig}v^{(r)}_{ig}\frac{\mu^{(r)}_{kg}}{\eta^{(r)}_g\gamma^{(r)}_g}\cvg
    +\sum_{i=1}^n \sum_{g=1}^G \sum_{k \in \ms_i} z^{(r)}_{ig}\left(1-v^{(r)}_{ig}\right)\frac{\mu^{(r)}_{kg}}{\gamma^{(r)}_g}\evg\nonumber\\
    &+\sum_{i=1}^n \sum_{g=1}^G \sum_{k \in \os_i} z^{(r)}_{ig}v^{(r)}_{ig}\frac{\mu^{(r)}_{kg}}{\eta^{(r)}_g\gamma^{(r)}_g}\ln x_{ik}
    +\sum_{i=1}^n \sum_{g=1}^G \sum_{k \in \os_i} z^{(r)}_{ig}\left(1-v^{(r)}_{ig}\right)\frac{\mu^{(r)}_{kg}}{\gamma^{(r)}_g}\ln x_{ik}\nonumber\\
     &- \sum_{i=1}^n \sum_{g=1}^G \sum_{k \in \ms_i} z^{(r)}_{ig}v^{(r)}_{ig}\cvg
    - \sum_{i=1}^n \sum_{g=1}^G \sum_{k \in \ms_i} z^{(r)}_{ig}\left(1-v^{(r)}_{ig}\right)\evg\nonumber\\
    &- \sum_{i=1}^n \sum_{g=1}^G \sum_{k \in \os_i} z^{(r)}_{ig}v^{(r)}_{ig}\ln x_{ik}
    - \sum_{i=1}^n \sum_{g=1}^G \sum_{k \in \os_i} z^{(r)}_{ig}\left(1-v^{(r)}_{ig}\right)\ln x_{ik}.
\end{align}

%and $\beta(\cdot)$ is the variability, dependent on whether the observation is generated by the reference or contaminated component:
%\begin{align*}
%    \beta\left(v_{ig}; \gamma^{(r)}, \eta^{(r)} \right) = 
%    \begin{cases}
%        \gamma^{(r)}     & \text{when } v_{ig} =0,\\
%        \eta^{(r)}\gamma^{(r)} & \text{when } v_{ig} =1.
%    \end{cases}
%`\end{align*}
\subsubsection{CM-Step}
The M-step at the $(r+1)^{th}$ iteration maximises \eqref{q_psi} with respect to $\bm\Psi$. Maximising with respect to $\pi_g$, and $\varepsilon$, subjected to the constraint mentioned in \eqref{fmm} updates $\pi_g^{(r+1)}$ as
\begin{align*}
    \pi_g^{(r+1)} = \frac{ n_g }{n}, \quad \varepsilon_g^{(r+1)} = \frac{n_{\mathrm{cont.},g }}{n_g}
\end{align*}
where $ n_g = \displaystyle\sum_{i=1}^n z^{(r)}_{ig}$ and $n_{\mathrm{cont.},g} =\displaystyle\sum_{i=1}^n z^{(r)}_{ig}v^{(r)}_{ig} $, for $g=1,\dots,G$. Maximising with respect to $\bmu_g$, $\gamma_g$, and $\eta_g$ is complicated for two reasons: (1) differentiating \eqref{q_psi} does not provide tractable estimators, and (2) $\bmu_g$ is subjected to the constraint $\| \bmu_g\|_1 =1$ for $g=1,\dots,G.$ For these reasons, $\bmu_g$, $\gamma_g$, and $\eta_g$ are estimated numerically through the constrained Newton-Raphson method. Conditioned on $\gamma_g$ and $\eta_g$, the partial derivative of \eqref{q_psi} with respect to $\bmu_{kg}$ is:
\begin{align*}
\dfrac{\partial \ell(\bmu_{kg})}{\partial\bmu_{kg}} &=  \frac{n_{\mathrm{ref.},g}}{\gamma^{(r)}_g} \varphi\!\left(\frac{\mu_{kg}}{\gamma^{(r)}_g}\right) 
+ \frac{1}{\gamma^{(r)}_g}\sum_{i=1}^{n}z^{(r)}_{ig}\left(1-v^{(r)}_{ig}\right)\left[\sum_{k \in \ms_i} \ln \evg 
+\sum_{k \in \os_i} \ln x_{ik} \right] \\
&- \frac{n_{\mathrm{cont.},g} }{\eta^{(r)}_g\gamma^{(r)}_g} \varphi\!\left(\frac{\mu_{kg}}{\eta^{(r)}_g\gamma^{(r)}_g}\right) + \frac{1}{\eta^{(r)}_g\gamma^{(r)}_g} \sum_{i=1}^{n}z^{(r)}_{ig} v^{(r)}_{ig} \left[\sum_{k \in \ms_i} \ln \cvg + \sum_{k \in \os_i}\ln x_{ik}\right],
\end{align*}
with $n_{\mathrm{ref.},g} = n_g - n_{\mathrm{cont.},g}$, so that the gradient with respect to $\bmu_g$ evaluated at some $\bmu^{(s)}_g$ is:
\begin{align*}
    \bm \nabla_{\bmu_g}^{(s)}\tilde{Q}
    %& = \begin{bmatrix}\dfrac{\partial \ell(\bmu_{1g})}{\partial\bmu_{1g}},&\dots&,& \dfrac{\partial \ell(\bmu_{pg})}{\partial\bmu_{pg}}\end{bmatrix} ^ \top.\nonumber\\
    & = \begin{bmatrix}\nabla^{(s)}_{\bmu_{1g}}\tilde{Q},&\dots&,& \nabla^{(s)}_{\bmu_{pg}}\tilde{Q} \end{bmatrix} ^ \top.
\end{align*}
The $kj^{th}$ element of the Hessian with respect to $\bmu_g$ is
\[
h_{kjg} =
\begin{cases} -\dfrac{1}{(\gamma_g^{(r)})^2} \varphi^{(1)}\!\left(\dfrac{\mu_{kg}}{\gamma_g^{(r)}}\right) n_{\mathrm{ref.},g}
- \dfrac{1}{(\eta_g^{(r)}\gamma_g^{(r)})^2} \varphi^{(1)}\!\left(\dfrac{\mu_{kg}}{\eta_g^{(r)}\gamma_g^{(r)}}\right) n_{\mathrm{cont.},g} & k=j,
\\
0 & k\neq j.
\end{cases}
\]
Consequently, when $h_{kjg}$ is evaluated at $\bmu^{(s)}_g$, it is denoted as $h^{(s)}_{kjg}$.
The nested update for $\bmu_g$ can be given elementwise as:
\begin{align*}
    \bmu_{kg}^{(s+1)} = \bmu_{kg}^{(s)} + \frac{\nabla^{(s)}_{\mu_{kg}} \tilde{Q} }{h^{(s)}_{kkg}} - \frac{\lambda^{(s)}_g }{h^{(s)}_{kkg}},
\end{align*}
where
\[
\lambda^{(s)}_g
=
 \frac{ \displaystyle \sum_{k=1}^p \frac{\nabla^{(s)}_{\mu_{kg}} \tilde{Q} }{h^{(s)}_{kkg}} }{ \displaystyle\sum_{k=1}^p \frac{1}{h^{(s)}_{kkg}}}.
\]
At convergence, the updates terminate and $\bmu_g^{(r+1)}$ is obtained. Now let \[
\gamma_g=e^{\omega_g},
\iff
\omega_g=\ln\gamma_g.
\]

Then the first two derivatives of \eqref{q_psi} with respect to $\omega_g$ evaluated at the $s^{th}$ update of $\omega_g$ is
\begin{align*}
    \nabla^{(s)}_{\omega_g}\tilde{Q} &= A^{(s)}_ge^{-\omega^{(s)}_g}, \text{ and }\\
    h^{(s)}_{\omega_g}               &= -A^{(s)}_ge^{-\omega^{(s)}_g} - B^{(s)}_ge^{-2\omega^{(s)}_g}.
\end{align*}
where,
\begin{align}
A^{(s)}_g =& \frac{n_{\mathrm{cont.,}g}}{\eta^{(r)}_g} \left[ \sum_{k=1}^{p} \mu^{(r+1)}_{kg} \varphi\!\left( \frac{\mu^{(r+1)}_{kg}}{\eta^{(r)}_g e^{\omega^{(s)}_g}} \right) 
- \varphi\!\left( \frac{1}{\eta^{(r)}_g e^{\omega^{(s)}_g}} \right) \right]
\nonumber\\&
+ n_{\mathrm{ref.,}g} \left[ \sum_{k=1}^{p} \mu^{(r+1)}_{kg} \varphi\!\left( \frac{\mu^{(r+1)}_{kg}}{e^{\omega^{(s)}_g}}\right)
- \varphi\!\left( \frac{1}{e^{\omega^{(s)}_g}} \right) \right]
\nonumber\\
&- \frac{1}{\eta^{(r)}_g} \left[ \sum_{i=1}^{n}\sum_{k\in\mathcal{M}_i} z^{(r)}_{ig}v^{(r)}_{ig}\cvg\mu^{(r+1)}_{kg} 
+ \sum_{i=1}^{n}\sum_{k\in\mathcal{O}_i} z^{(r)}_{ig}v^{(r)}_{ig}\mu^{(r+1)}_{kg}\ln x_{ik} \right]
\nonumber\\
&- \left[ \sum_{i=1}^{n}\sum_{k\in\mathcal{M}_i} z^{(r)}_{ig}(1-v^{(r)}_{ig})\evg\mu^{(r+1)}_{kg}
+ \sum_{i=1}^{n}\sum_{k\in\mathcal{O}_i} z^{(r)}_{ig}(1-v^{(r)}_{ig})\mu^{(r+1)}_{kg}\ln x_{ik}
\right],
\end{align}
and 
\begin{align*}
B^{(s)}_g =& \frac{n_{\mathrm{cont.,}g}}{\left(\eta^{(r)}_g\right)^2} \left[ \sum_{k=1}^{p}( \mu^{(r+1)}_{kg})^{2}
\varphi^{(1)}\!\left( \frac{\mu^{(r+1)}_{kg}}{\eta^{(r)}_g e^{\omega^{(s)}_g}} \right)
- \varphi^{(1)}\!\left( \frac{1}{\eta^{(r)}_g e^{\omega^{(s)}_g}} \right) \right]
\nonumber\\&
+ n_{\mathrm{ref.,}g} \left[ \sum_{k=1}^{p} (\mu^{(r+1)}_{kg})^{2} \varphi^{(1)}\!\left( \frac{\mu^{(r+1)}_{kg}}{e^{\omega^{(s)}_g}} \right)
- \varphi^{(1)}\!\left( \frac{1}{e^{\omega^{(s)}_g}} \right) \right].
\end{align*}
Thus the NR algorithm updates $\omega_g$ as:
\begin{align*}
    \omega_g^{(s+1)} = \omega_g^{(s)} + \frac{1}{1 + \frac{B^{(s)}_g}{A^{(s)}_g}e^{-\omega^{(s)}_g} },
\end{align*}
to which $\gamma^{(s+1)}_g$ is updated as $ \omega_g^{(s+1)}$ is at convergence. Lastly, let
\[
\eta_g=e^{\delta_g},
\iff
\eta_g=\ln\delta_g.
\]
and the first two derivatives with respect to $\delta$, when evaluated at $\delta^{(s)}$ are
\begin{align*}
    \nabla_{\delta_g}^{(s)}\tilde{Q} &=  C_g^{(s)} \frac{e^{-\delta_g^{(s)}}}{\gamma_g^{(r+1)}} \text{ and } \\
    h^{(s)}_{\delta_g}               &= -C_g^{(s)} \frac{e^{-\delta_g^{(s)}}}{\gamma_g^{(r+1)}} - D_g^{(s)}\frac{e^{-2\delta_g^{(s)}}}{\gamma_g^{(r+1)}}
\end{align*}
where,
\begin{align*}
C_g^{(s)} &= \left\{ n_{\mathrm{cont.,}g}\left[ \varphi\!  \sum_{k=1}^{p} \mu_{kg}^{(r+1)} \varphi\!\left( \frac{\mu_{kg}^{(r+1)} e^{-\delta_g^{(s)}}}{\gamma_g^{(r+1)}} \right) - \left( \frac{e^{-\delta_g^{(s)}}} {\gamma_g^{(r+1)}} \right) \right]
\right.\\ & \quad \left.
- \sum_{i=1}^{n}\sum_{k \in \ms_i}z^{(r)}_{ig}v^{(r)}_{ig}\mu_{kg}^{(r+1)}\cvg 
- \sum_{i=1}^{n}\sum_{k \in \ms_i}z^{(r)}_{ig}v^{(r)}_{ig}\mu_{kg}^{(r+1)}\ln x_{ik}\right\},
\end{align*}
 and 
\begin{align*}
D^{(s)}_{g}
&=  n_{\mathrm{cont.,}g} \left[ \varphi\!  \sum_{k=1}^{p} \mu_{kg}^{(r+1)} \varphi\!\left( \frac{\mu_{kg}^{(r+1)} e^{-\delta_g^{(s)}}}{\gamma_g^{(r+1)}} \right) - \left( \frac{e^{-\delta_g^{(s)}}} {\gamma_g^{(r+1)}} \right) \right]
\end{align*}

Thus the NR algorithm updates $\delta_g$ as:
\begin{align*}
    \delta_g^{(s+1)} = \delta_g^{(s)} + \frac{1}{ 1 + \frac{D^{(s)}_{g}}{C_g^{(s)}}e^{-\delta_g^{(s)}}},
\end{align*}
to which $\eta^{(s+1)}_g$ is updated as $ \delta_g^{(s+1)}$ is at convergence.

\subsection{Initialisation and convergence}
Initialising parameters for EM-based algorithms benefit from non-parametric clustering techniques, suck as k-means, k-medoids, hierarchical clustering, and so on. In fact, there are the simplex-based analogues of said initialisation techniques that specific for compositional data. Unfortunately, packages in \texttt{R} do not directly use incomplete compositions. Instead, the missing values would need to be imputed first and then clustered. From literature, a suitable procedure is to impute the incomplete values via $k$ nearest neighbours (knn), \citep{zCompositions} closed to sum to one, and then apply a non-parametric technique. In this paper, Aitchison k-means is used to provide initial cluster partitions to mimic how mixture models would be initialised on the Euclidean space \citep{kmeans_compositional}.\\ 

Within the EM-based algorithm, the NR algorithm requires suitable starting values in order to maximise the chances of convergence to the global maximum of the log-likelihood. Particular attention is paid to the initialisation of $\bmu_g$, which is solved for by a constrained NR algorithm. The formulation in the M-step implicitly ensures the updates remain positive. However, due to the additive constraint, the formulation is only valid when the initial value of $\bmu_g$ adheres to the constraint. There is a variant that allows for $\bmu_g$ to be initialised without necessarily satisfying the additive constraint, but for purposes of model-fitting, this is not needed, as it is preferred to initialise population moments with sample moments, which already satisfy the additive constraint. As a result, the method-of-moments estimates provide simple and reliable starting values for $\bmu_g$ and $\gamma_g$ \citep{estimation_NR}. The sample mean of the $g^{\text{th}}$ component and the variability of the sample are computed using the imputed observations as follows
\begin{align}
\overline{\bm x}_{g} = \frac{1}{n_g} \sum_{i=1}^{n_g} \bm x_{i,\mathrm{knn}}, \qquad s^2 =\frac{1}{n-1}\sum_{i=1}^n \sum_{k=1}^p (x_{ik,\mathrm{knn}} - \overline{x}_{kg})^2.
\end{align}
The parameters governing contamination are typically initialised to values that position the contaminated density \eqref{cont_dirichlet} to its reference density \eqref{mean_dirichlet}, with the conservative assumption that there are no outlier points, leaving the algorithm to report otherwise. Further, the existence and uniqueness properties established require compact parameter spaces for the bounded parameters. Thus, it is favourable to set $\eta_g$ amd $\varepsilon$ to their respective lower bounds 
\[ 1 + \epsilon \text{ and } \epsilon, \] 
where $\epsilon$ is a predetermined value, such as machine precision. In this paper, $\epsilon = 0.001$.\\

Two stopping rules must be specified for the proposed algorithm. 
An inner stopping rule controls the convergence of the Newton--Raphson (NR) procedure used within the M-step, while an outer stopping rule determines the convergence of the overall EM algorithm.

For the NR algorithm, convergence is monitored using the decrement $\frac{1}{2}D_g^2(\bm \theta^{(s+1)})$ for parameter $\bm \theta$ where:
\[ D_g  = \sqrt{ \left(\Delta \bm \theta_g^{(s+1)}  \right)^{\top} \bm H^{(s+1)} \Delta \bm\theta_g^{(s+1)}    },
\]
with $\Delta \bm \theta_g^{(s+1)} = \bm \theta_g ^{(s+1)} - \bm \theta_g ^{(s)}$ and $\bm H^{(s+1)}$ is the Hessian matrix evaluated at $\bm \theta_g ^{(s+1)}$. The criterion, $\frac{1}{2}D_g^2$ provides an approximation to the distance from the current objective function value to its supremum, namely $\left| Q(\bm{\theta}_g) - \underset{\bm\theta}{\mathrm{sup}}Q(\bm\theta)\right|$. Thus, iterations are stopped once $\frac{1}{2}D_g^2$  falls below a small tolerance when iterating over the estimator for the $g^{th}$ cluster.

The observed-data log-likelihood increases monotonically under the EM algorithm. However, temporary stability may occur when the algorithm approaches a local maximum before moving towards the global maximum. To assess convergence to the asymptotic log-likelihood, we employ the Aitken acceleration criterion. Let $l_o^{(r)}$ denote the observed log-likelihood at the $r^{\text{th}}$ EM iteration. The Aitken acceleration is defined as
\begin{align*}
a^{(r+1)} =
\frac{l_o^{(r+2)}-l_o^{(r+1)}}{l_o^{(r+1)}-l_o^{(r)}} .
\end{align*}
The corresponding estimate of the asymptotic log-likelihood is
\begin{align*}
(l_o^{\infty})^{(r)} =
l_o^{(r+1)} +
\frac{l_o^{(r+2)}-l_o^{(r+1)}}{1-a^{(r+1)}} .
\end{align*}
The EM algorithm is considered to have converged when
\[
(l_o^{\infty})^{(r)}-l_o^{(r+1)} < \epsilon,
\]
where $\epsilon>0$ is a small tolerance \citep{convergence_aitken}.

Finally, a subtle point worth mentioning about the NR algorithm is that, strictly speaking, numerically solves for the roots of the gradient. Thus, it is possible for the algorithm to converge to a stationary point that is not the maximum, or for the search direction to cause the algorithm to diverge. Theorem \ref{all_unique} ensures strict concavity, ruling out the former being a problem. Much like EM-based algorithms, convergence of the NR is not necessarily guaranteed. However, it can be supported that, should the starting value be sufficiently close, the algorithm will converge to the maximum. Formally, it is given in the following theorem
\begin{theorem}[Convergence of NR algorithm]
    Suppose that $\tilde{Q}(\bm \theta)$ is twice differentiable and that the Hessian is negative definite and Lipschitz continuous in a neighbourhood of a stationary point $\bm \theta^*$. Then, if the initial point $\bm \theta^0$ is sufficiently close to $\bm \theta^*$:
    \begin{itemize}
        \item[1.] The sequence of iterations converge to $\bm \theta^*$. 
        \item[2.] The iterations $\bm \theta^{(s)}$ converge quadratically to $\bm \theta^*$.
    \end{itemize}
    \begin{proof}
        See Theorem 3.7 of \citep{newton_convergence}.
    \end{proof}
\end{theorem}
It can be shown that the Hessian of $\tilde{Q}$ given in \eqref{q_psi} is locally Lipschitz continuous around its maximisers. Thus, the NR iterations converge quadratically to the maximiser.

\subsection{Clustering and outlier detection}
The construction of the complete-data log-likelihood \eqref{q_psi} allows one to determine cluster membership through their respective maximum a posteriori (MAP) probabilities. That is, setting $\hat{z}_{ig}$ to be the value of $z^{(r)}_{ig}$ at convergence of the EM-based algorithm, an observation $\bx_{i,o}$ is deemed part of the $g^{th}$ cluster if \begin{align}
\label{responsibilities}
    \hat{z}_i =\underset{g}{ \mathrm{argmax}}~\hat{z}_{ig}.
\end{align}

 The second indicator, $v_{ig}$ further flags typical and outlier points after all points are assigned a cluster. Thus, the EM-based algorithm offers outlier detection capability via the following a posteriori probability for an observation $\bx_{i,o}$, provided that $\varepsilon_g \in (0,0.5)$:
 \begin{align}
\label{is_outlier}
    \hat{v}_i = \mathbb{I}(\hat{v}_{ig}>0.5|\hat{z}_i),
\end{align}
 where $\mathbb{I}(\cdot)$ denotes the indicator function. For $\hat{v}_{i}>0.5,~ \bx_{i,o}$ is considered an outlier. 
 
 Conveniently, a posteriori probabilities $\hat{z}_i$ and $\hat{v}_i$ are automatically calculated as part of the EM-based algorithm's $k^{th}$ iteration.

\section{Simulation study}
\label{simulation}
The goal of the simulation experiment is to evaluate three aspects of the EM-based algorithm: (1) the clustering capability of incomplete compositions, (2) the model selection performance, and (3) the capability of algorithm to detect outlier points. In this study, 1000 datasets are generated as a mixture of $G=2$ Dirichlet distributions contaminated with noise, and subject to missingness at random. The choice of clusters is to reflect the results obtained in the application in Section \ref{application}. Similarly, the parameter choice is computed according to the degree of pairwise cluster overlap apparent in the data.

\begin{align*}
    \bmu_1 =    \begin{bmatrix}
                    \num{0.06555531}\\
                    \num{0.26965129}\\
                    \num{0.13295868}\\
                    \num{0.13295868}\\
                    \num{0.13295868}\\
                    \num{0.13295868}\\
                    \num{0.13295868}
                \end{bmatrix},
\qquad
\bmu_2 =        \begin{bmatrix}
                    \num{0.26965129}\\
                    \num{0.06555531}\\
                    \num{0.13295868}\\
                    \num{0.13295868}\\
                    \num{0.13295868}\\
                    \num{0.13295868}\\
                    \num{0.13295868}
                \end{bmatrix},
\qquad \gamma_1 = \gamma_2 =  \num{0.01889638}, 
\qquad 
\pi_1 = \num{0.65},~ \pi_2 = \num{0.35} 
\end{align*}    
The parameter $\bmu$ is chosen to have overlap over all but two elements, to mimic the overlap estimated in the dataset. The choice of $\gamma$ also corresponds to the level of cluster overlap as measured by the cluster overlap coefficient defined by \cite{ovc}.

The ability of the algorithm to accurately cluster incomplete observations from the reference distribution and incomplete observations and identify outlier points, despite their incompleteness are investigated. Lastly, the algorithm's ability to recover the true number of clusters present in the dataset via different model selection criteria is observed.
Throughout the study, the following scenarios are considered:
\begin{itemize}
    \item[(a).] The percentage of missing cells ranges from 0\% to 90\% in increments of 10\%.
    \item[(b).] The percentage of noise added ranges from 10\% to 30\% in increments of 10\%.
    \item[(c).] Small sample size ($n=100$) and large sample size ($n=1000$). 
\end{itemize}

\subsection{Results}
The average accuracy and ARI scores for the 1000 datasets are plotted in \figurename~\ref{fig:acc_ari}. As expected, the accuracy and ARI drop as the percentage of values in the dataset increases, with sample size becoming a stronger divisor between the performances. Interestingly, the percentage of noise has a greater effect on accurately clustering the typical points, with an apparent concave-like accuracy line for $n$=100 with 30\% of rows replaced with noise. Here, the full dataset has a worse accuracy compared to cases where values are missing. This can be explained by what noise on the simplex is. Combining a small sample with a relatively high percentage of outliers means that, out of 70 data points from the reference distribution, some random configurations of outlier points and reference points are such that the clusters appear different to what the underlying distribution is. This is motivated by the improvement in cluster performance as the percentage of missing values increases -- the effect of the noise on altering the clusters is lessened as there are fewer of these points. Ultimately, for higher percentages of missing values, the lack of information---outlier or otherwise---is enough to have a negative impact on accuracy and ARI scores.

\begin{figure}[H]
    \centering
    \includegraphics[trim={0cm 0.5cm 0cm 0.7cm}, clip, width=0.7\linewidth]{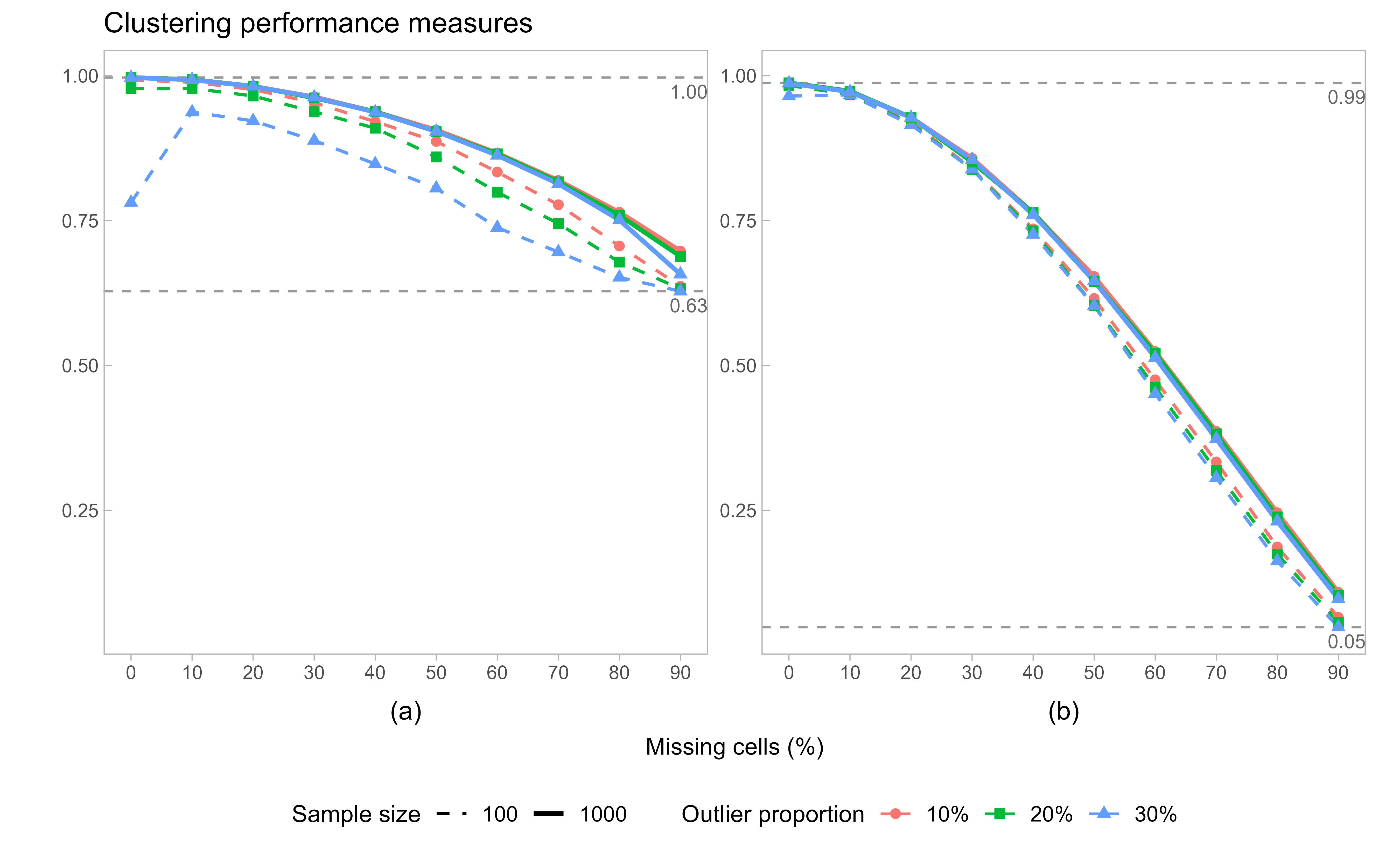}
    \caption{Cluster performance metrics: (a) Average accuracy scores and (b) average ARI scores.}
    \label{fig:acc_ari}
\end{figure}

%\subsection{Outlier detection performance}

\begin{figure}[H]
    \centering
    \includegraphics[trim={0.5cm 0.5cm 0cm 2.3cm}, clip, width=0.98\linewidth]{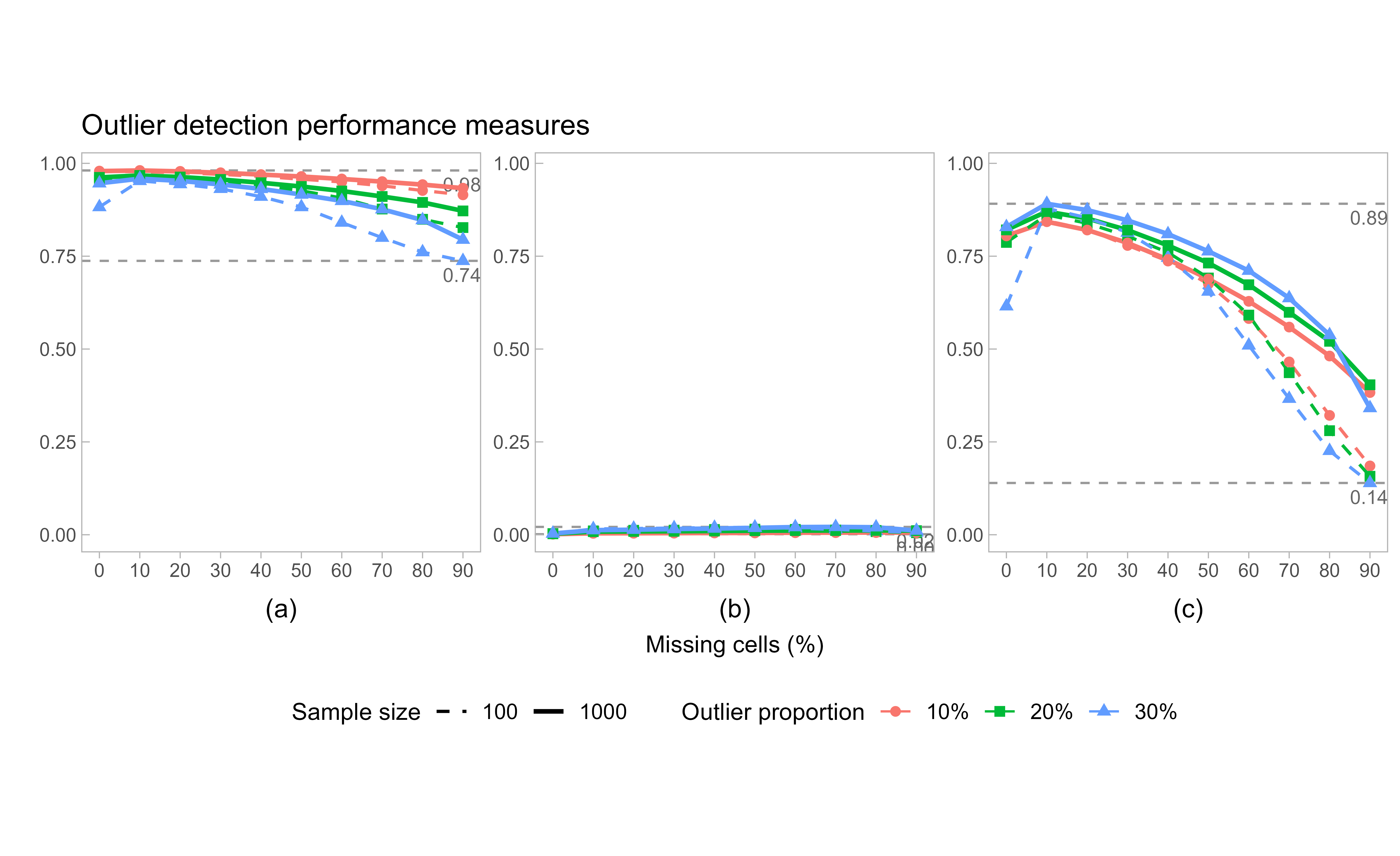}
    \caption{Outlier point detection performance metrics: (a) Average accuracy scores, (b) average FPR, and (c) TPR scores.}
    \label{fig:detection}
\end{figure}

Since the EM-based algorithm can flag outliers, we inspect the performance for outlier detection. Since the number of outliers is fewer than typical points, we report the detection's false positive rate (FPR) and true positive rate (TPR), in addition to the accuracy score. The averages of these metrics are plotted in \figurename~\ref{fig:detection}. The accuracy is, as expected, quite high, even when 90\% of cells are missing. This can be explained by the corresponding decrease in the TPR for high percentages of missing values: that is, there is a bias towards classifying a point as typical amidst sparse data rows, as supported by the low FPR. This outcome is reasonable since a row that is an outlier is more difficult to classify as an outlier when subject to high rates of missingness. A data point is less likely to be imputed as an outlier within the E-step of the EM-based algorithm. Notice though, that sample size does play a greater role in these scenarios. The outlier detection metrics for $n$ = 1000, are better overall compared to the small sample size $n$ = 100. 

To explain the surprisingly concave behaviour around $10\%$ in \figurename~\ref{fig:detection}, we turn to parameter recovery. We consider the bias and root mean squared error (RMSE) of the parameters, defined as:
\begin{align*}
    \mathrm{bias}(\hat{\bm\theta}) = \frac1{1000}\sum_{b=1}^{1000} \sum_{k=1}^{p}\left( \hat{\bm \theta}_b - \bm \theta \right)
    \quad   
    \text{and}
    \quad
    \mathrm{RMSE}(\hat{\bm\theta}) = \sqrt{\frac1{1000}\sum_{b=1}^{1000} \sum_{k=1}^{p}\left( \hat{\bm \theta}_b - \bm \theta \right)^2},
\end{align*}
where $\hat{\bm \theta}_b$ is the estimated parameter for the $b^{th}$ dataset with $b=1,\dots,1000$, and $\bm \theta$ denotes the true parameter.
\begin{figure}[H]
    \centering
    \includegraphics[trim={0cm 0.5cm 0cm 0.7cm}, clip, width=0.7\linewidth]{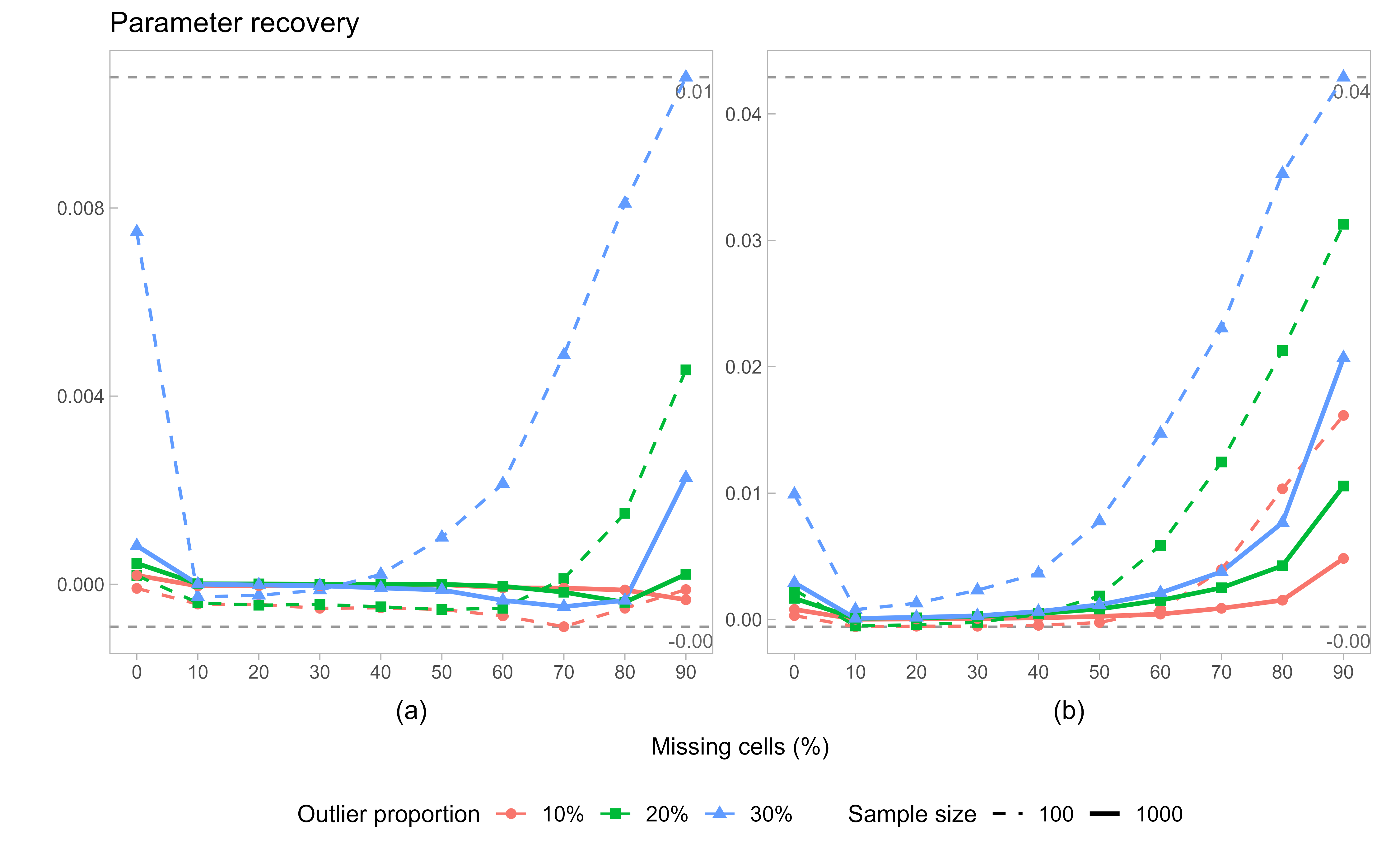}
    \caption{(a) Bias of $\hat\gamma_1$ and (b) bias of $\hat\gamma_2$.}
    \label{fig:gam_bias}
\end{figure}

\begin{figure}[H]
    \centering
    \includegraphics[trim={0cm 0.5cm 0cm 0.7cm}, clip, width=0.7\linewidth]{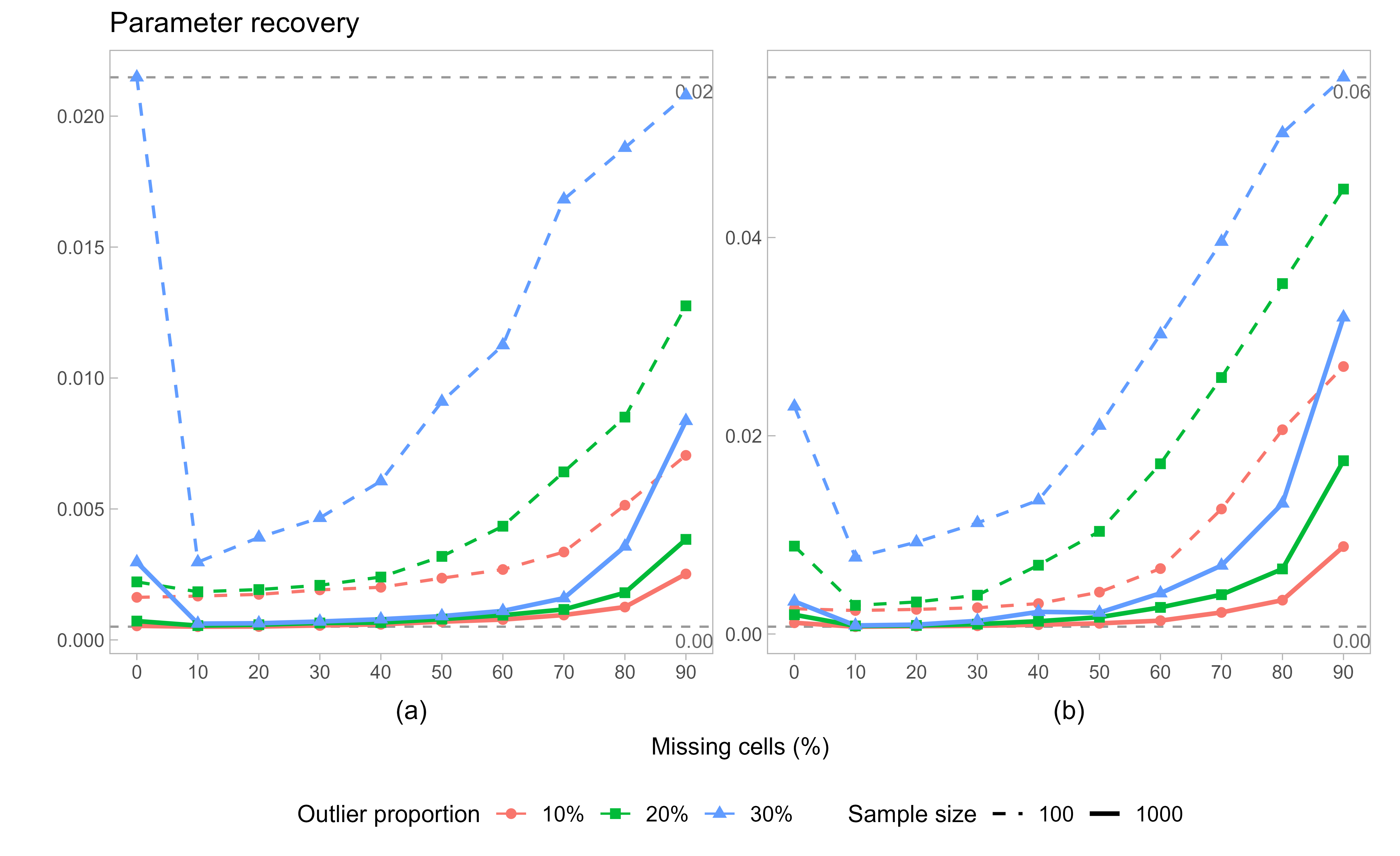}
    \caption{(a) RMSE of $\hat\gamma_1$ and (b) RMSE of $\hat\gamma_2$.}
    \label{fig:gam_rmse}
\end{figure}
\figurename~\ref{fig:gam_bias} and \figurename~\ref{fig:gam_rmse} illustrate a non-monotonic relationship between the proportion of missing data and the recovery of the variability parameter. When the data are fully observed, the parameter tends to be overestimated, particularly as the level of contamination increases, which can be explained by the contaminated model trying to accommodate a separate cluster under a 2-component restriction. However, introducing a moderate percentage of missingness substantially reduces this bias, with the improvement becoming more pronounced as the sample size increases. This suggests that there is a favourable range of missingness percentages within which parameter recovery is most accurate, which is also reflected in the corresponding improvements in clustering and outlier detection. As the proportion of missing data becomes large, the loss of information begins to outweigh this benefit and the bias increases again. Thus, rather than a strictly positive relationship between bias and the missingness percentage as one would expect, the convex-like trend in the results indicates a 'sweet spot' of moderate missingness in which the model achieves particularly favourable parameter recovery while maintaining suitable identification of outliers. Interestingly, \figurename~\ref{fig:mu_bias} demonstrates that the estimation of the means remain unbiased despite the percentage of values missing from a dataset, suggesting that the parametrisation is advantageous against both contamination and missing values.
Of course, its RMSE displays a similar behaviour to what was observed in estimating the variability, as illustrated in \figurename~\ref{fig:mu_rmse}. Particularly, the RMSE of the estimator $\hat{\bm \mu}$ is affected by how much information is missing from a dataset.  
\begin{figure}[H]
    \centering
    \includegraphics[trim={0cm 0.5cm 0cm 0.8cm}, clip, width=0.7\linewidth]{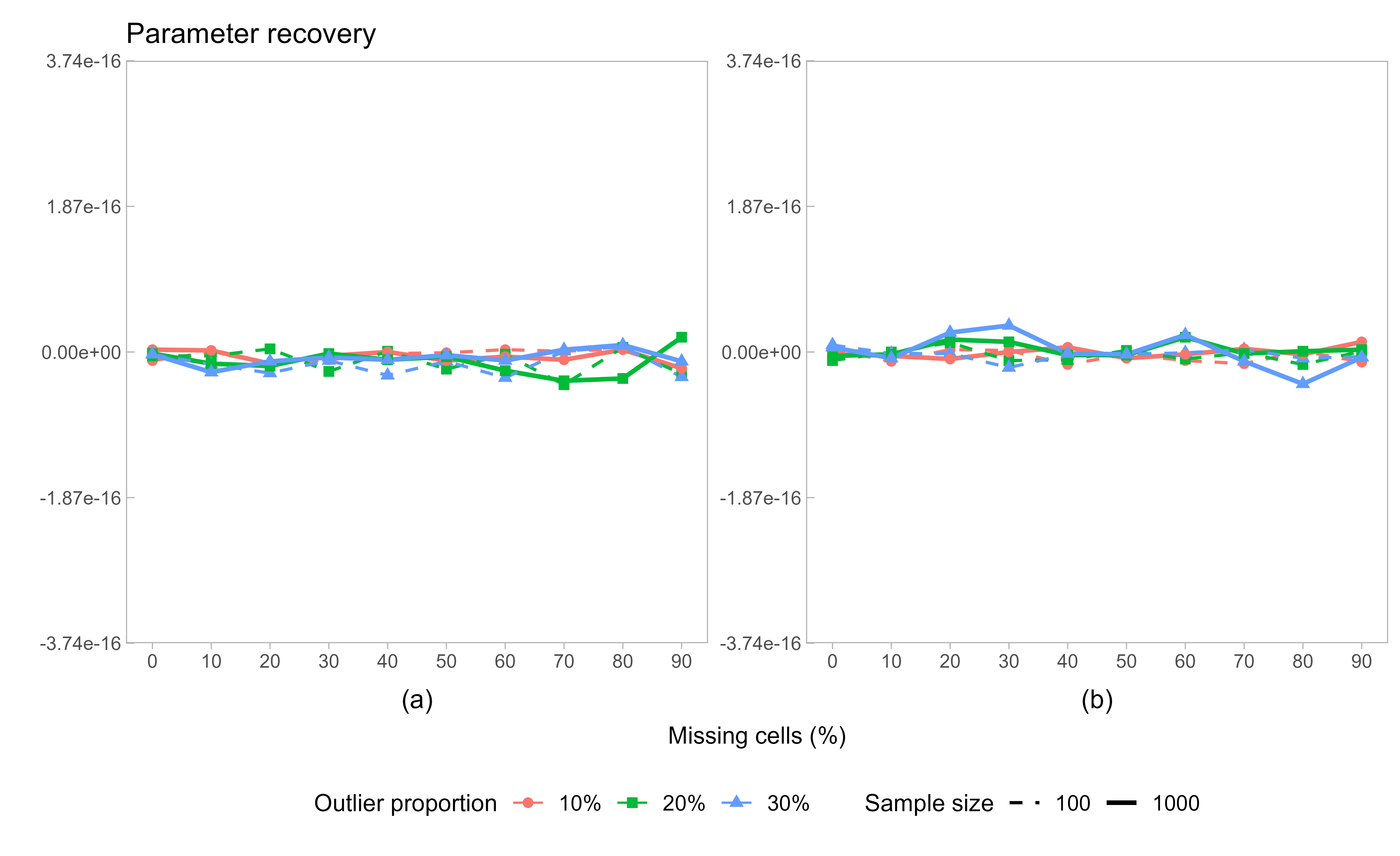}
    \caption{(a) Bias of $\hat {\bm \mu}_1$ and (b) bias of $ \hat{\bm \mu}_2$.}
    \label{fig:mu_bias}
\end{figure}
\begin{figure}[H]
    \centering
    \includegraphics[trim={0cm 0.5cm 0cm 0.7cm}, clip, width=0.7\linewidth]{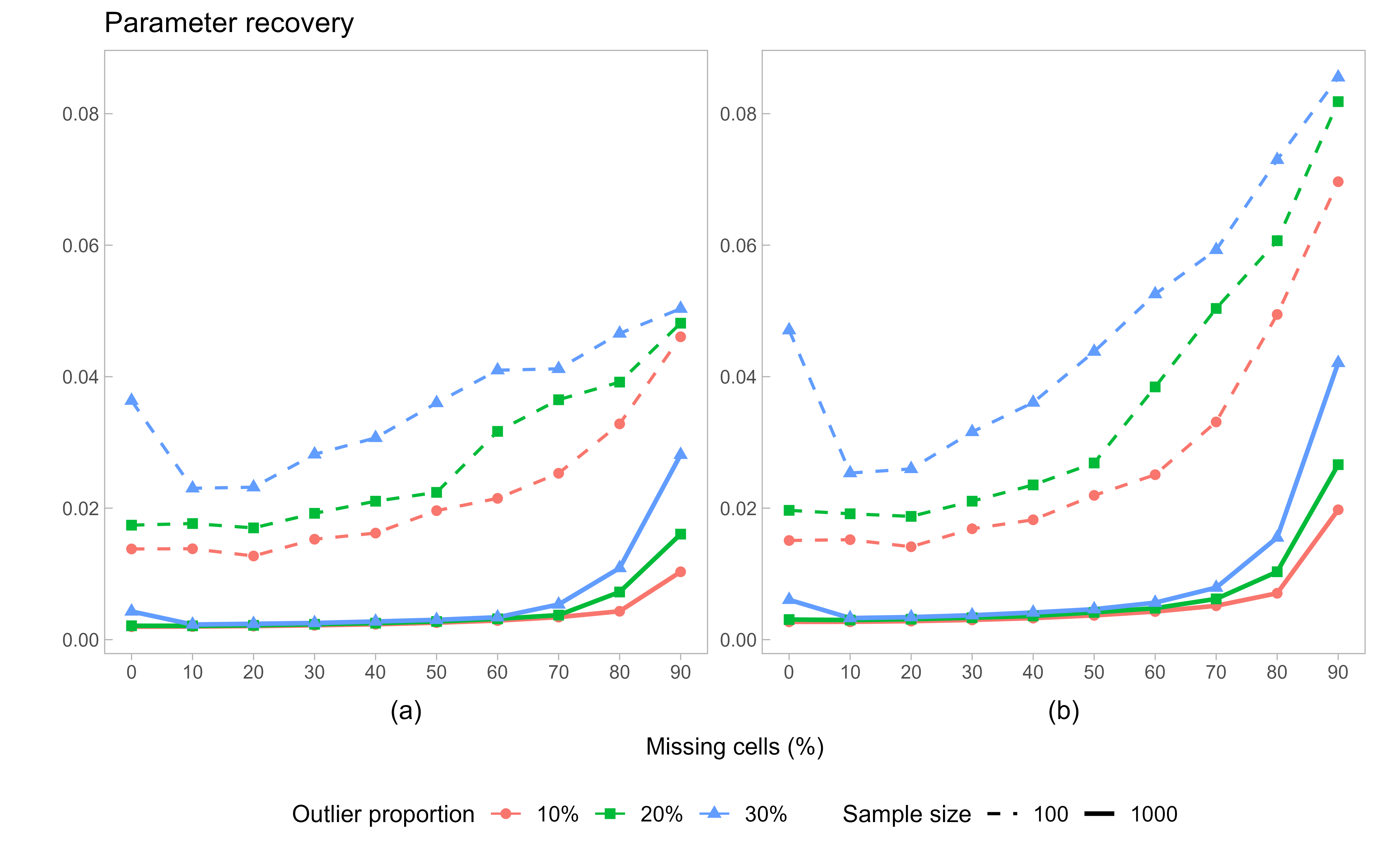}
    \caption{(a) RMSE of $\hat {\bm \mu}_1$ and (b) RMSE of $\hat {\bm \mu}_2$.}
    \label{fig:mu_rmse}
\end{figure}
%\subsection{Model selection}

In practice, the number of components is not known beforehand. The preferred approach in literature is to fit \eqref{cont_dirichlet} on a dataset for a range of pre-determined number of components, and compare their likelihoods through model selection metrics, such as the Akaike Information Criterion (AIC), Bayesian Information Criterion (BIC), and the Integrated Classification Likelihood (ICL), among others. Hence, for all 1000 noisy and potentially incomplete datasets simulated, the contaminated Dirichlet distribution is fitted for components $1\,\dots,p-1$, where $p$ is the dimension of the random vector. The reason for choosing such an upper bound is to maintain model identifiability \citep{identifiability}. We plot the average number of components that optimises the model selection criterion, given in \figurename~\ref{fig:selection}. We compare this with the scenario where finite mixtures of the reference model \eqref{mean_dirichlet} are fitted to the same datasets, and plot the average number of components identified by the model selection metric, in \figurename~\ref{fig:selection_ref}. The AIC, BIC, and ICL are more likely to identify the correct number of components present in the data, when fitting model \eqref{fmm}. For higher percentages of missing values, this average decreases, since sparser datasets produce lower log-likelihood values, thus favouring fewer components to describe the dataset than what is true. Additionally, the average number of components the AIC selects is higher overall compared to BIC and ICL. This is expected as the AIC penalty is weaker compared to the latter criteria. Thus, it tends to favour higher components, overall. This also explains why the average number of components does not drop to 1 for the AIC when datasets have higher rates of missing values.
\begin{figure}[H]
    \centering
    \includegraphics[trim={0.5cm 0.5cm 0cm 2.1cm}, clip, width=0.8\linewidth]{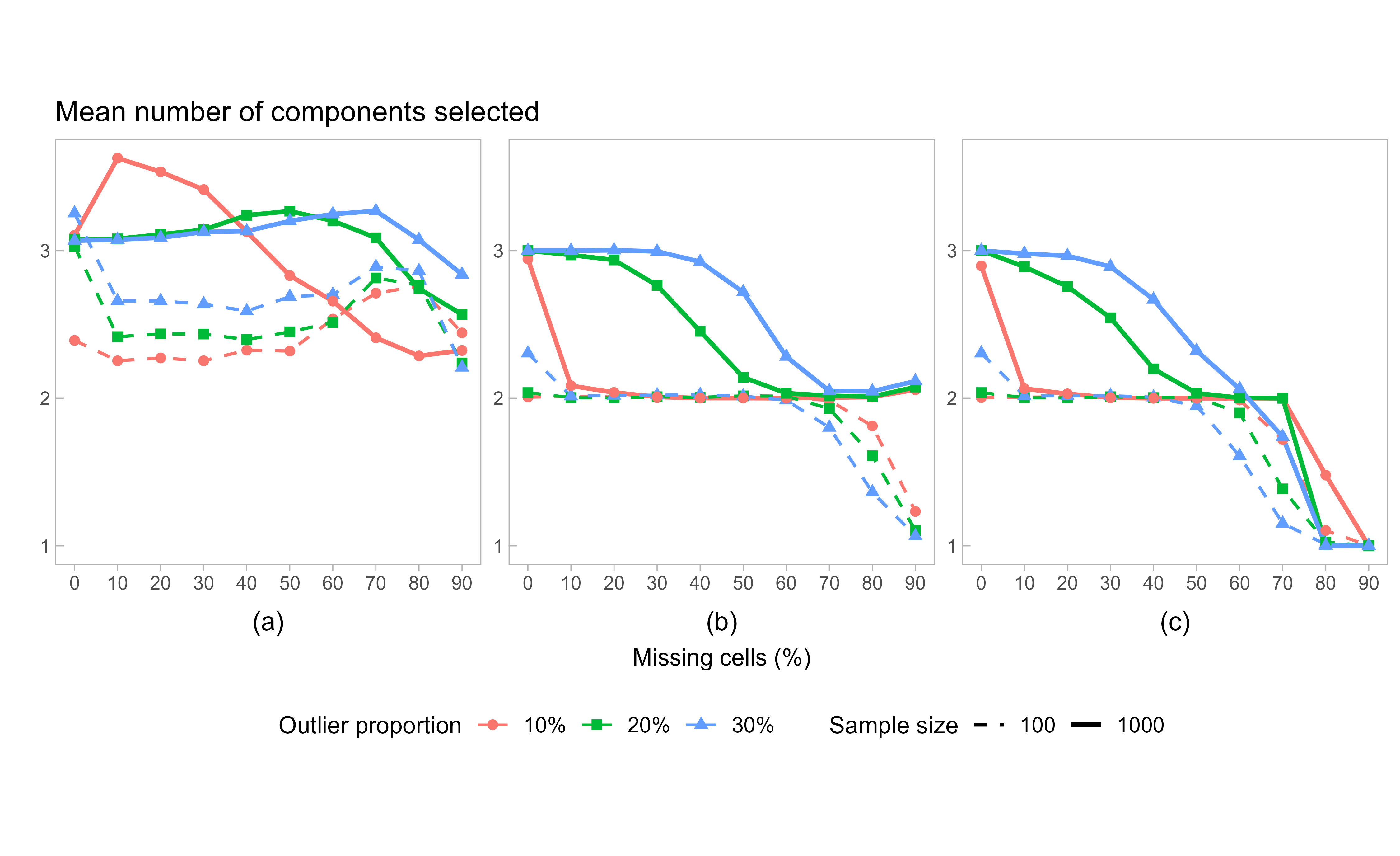}
        \caption{Average of the number of components identified in the dataset under the contaminated model: (a) AIC, (b) BIC, and (c) ICL.}
    %\caption{Probability of model selection metric identifying two components present in the dataset under the contaminated model: (a) AIC, (b) BIC, and (c) ICL.}
    \label{fig:selection}
\end{figure}

\begin{figure}[H]
    \centering
    \includegraphics[trim={0.5cm 0.5cm 0cm 2.1cm}, clip, width=0.8\linewidth]{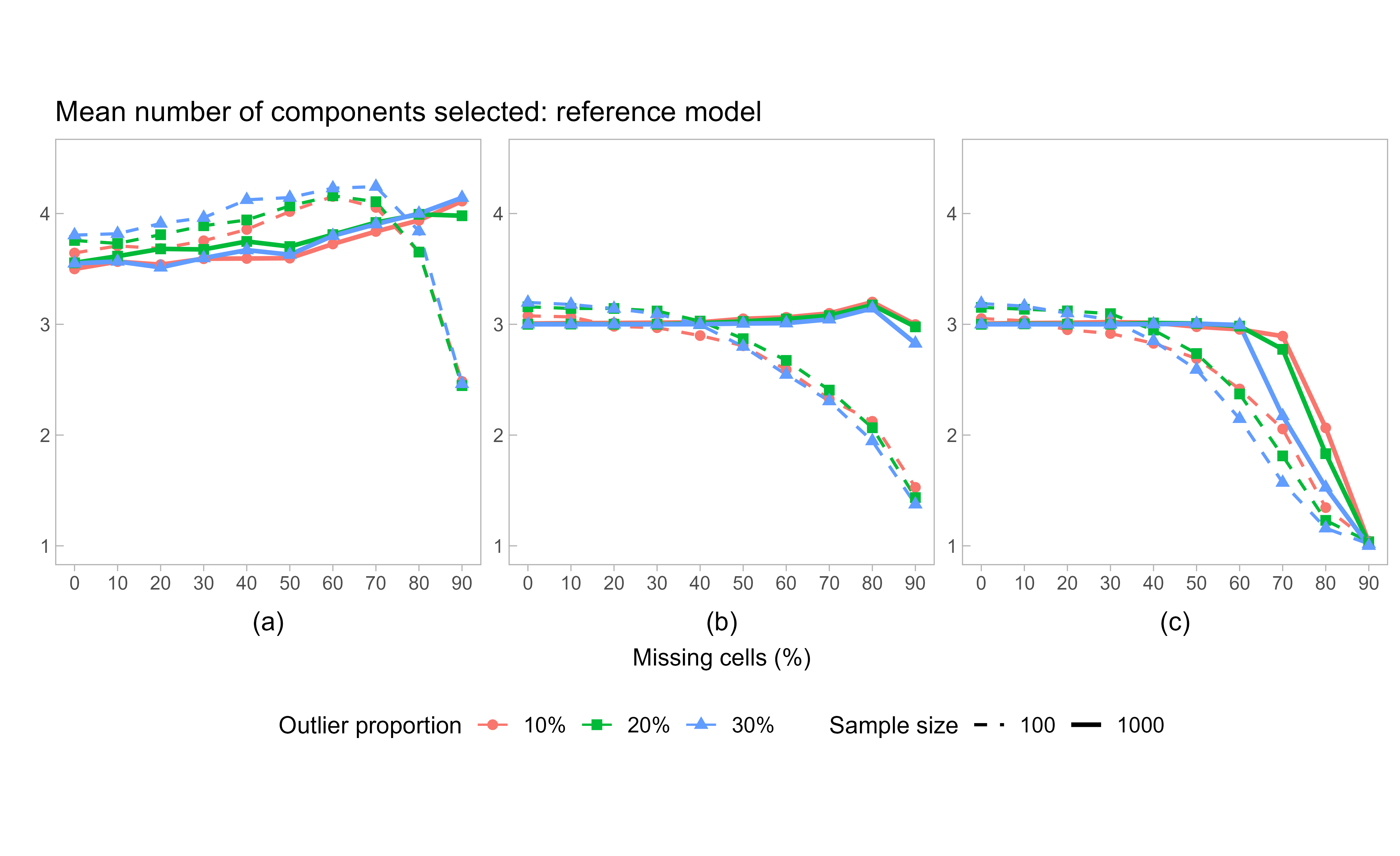}
    \caption{Average of the number of components identified in the dataset under the reference model: (a) AIC, (b) BIC, and (c) ICL.}
    \label{fig:selection_ref}
\end{figure}
What is interesting to note is that for large datasets, noise affects model selection when the missingness percentage is low and the outlier percentage is relatively high. This ties in with uniform noise is distributed as on the simplex, following a $\bD_{\simplex}\left(\frac{1}{p}\bm 1, \frac{1}{p}\right)$ distribution. Thus, for a relatively fuller dataset, the model selection criteria tend to favour a three-component model, in which one component accounts for the outlier points. This scenario is analogous to a finite mixture model that incorporates a uniform component. However, accounting for outliers has shown improvement for moderate to higher percentages of missing values compared with the reference case. Thus, the inference in this paper finds a serendipitous intersection: the challenge of simultaneously identifying outliers in incomplete datasets are best modelled by a finite mixture of contaminated Dirichlet distributions, and fitted with the EM-based algorithm proposed in this paper. 
\section{Application: American Time Use Survey data}
\label{application}
The application demonstrates the usefulness of the algorithm when applying the contaminated Dirichlet distribution and its finite mixtures to gather insights on compositional data, and its effectiveness at identifying clusters within incomplete data. It considers the American Time Use survey data.

The EM-based algorithm is applied to analysing time use data generated by the American Time Use survey (ATUS). The survey is sponsored by the US department of Labor Statistics conducted by the U.S. Census Bureau. The goal of ATUS is to measure how people divide their time among life’s activities and develop nationwide representative estimates of how people spend their time. In ATUS, individuals are randomly selected from a subset of households and respondents are interviewed about how they spent their time on the previous day, including where they were, and whom they were with. Many ATUS users are interested in the amount of time Americans spend doing unpaid, non-market work, which could include unpaid childcare, elder care, housework, and volunteering. The most recent dataset can be downloaded from the US Bureau of Labor Statistics website, and is under the list of Basic files for the year 2025\footnote[1]{U.S. Bureau of Labor Statistics. American Time Use Survey (ATUS). Last Modified Date: 27 July 2026. Internet: \url{https://www.bls.gov/tus/data/datafiles-2025.htm}}. The dataset contains $n=6126$ responses on what activity was conducted and its duration (in minutes). Therefore, each response is compositional in the sense that 24 hours (or 1440 minutes in this case) is a physical constraint that all durations should add up to. Each response is given as a diary entry and is unique to the respondent's lifestyle. The dataset is then pivoted to wider format so that each respondent has the same set of activities they could have potentially done. For this analysis, we focus on the ratio of responsibilities to free time respondents have on a day-to-day basis. This angle has potential for insights into how lifestyle differences between respondents are linked to how their time is utilised, and it ensures that each activity is a reasonable inclusion to a respondent's day. Since a respondent must report their day and only on what they did on the day prior to the survey, activities between responses are inconsistent, thus introducing missing values. A summary of the activities focused on and their missingness percentages are provided in \tablename~ \ref{missing_atus}.

\begin{table}[ht]
    \centering
    \caption{Percentage of missing values by activity.}
    \begin{tabular}{clS}
    \toprule
          & Activity                           &  {Missing (\%)}      \\
            \midrule  
       X1 & Personal care                            &   0.0489716    \\
       X2 & Household responsibilities               &   15.5729677   \\
       X3 & Caring for and helping household members &   79.7747307   \\
       X4 & Work and work-related activities         &   67.6297747   \\
       X5 & Socialising, Relaxing, and Leisure       &   5.2236370    \\
       X6 & Sports, Exercise, and Recreation         &   76.9996735   \\
       X7 & Other activities                         &   98.5634998   \\
        \bottomrule
    \end{tabular}
    \label{missing_atus}
\end{table}
Overall, the dataset contains 49.116\% of missing cells. Finite mixtures of the Dirichlet distribution are fitted for components $G = 1,2,\dots, 6$. We aim to select a model that shows the greatest evidence for clustering the dataset. Larger values of $G$ increase the risk of unrealistically small mixing proportions that may still be justified by the BIC approximation.\citep{icl}. Thus, we consider the integrated classification likelihood (ICL) measure to circumvent this problem \citep{icl_approx}. It also takes into account the ability of the mixture model to give evidence to the resulting clusters. The penalty of the metric is dependent on the uncertainty in the clusters. 
%The ICL approximation is given as \citep{icl_approx}:
%\begin{align}
%\label{icl}
%    ICL \approx 2l(\widehat{\balpha};\bm{z},\mathcal{X}_o) - P\ln n,
%\end{align}
%where $\widehat{\balpha}$ and $\bm{z}$ are the estimates and MAP probabilities \eqref{responsibilities} respectively, after the proposed algorithm has converged, and $P$ is the number of free parameters in the mixture model that are estimated. 
\begin{figure}
    \centering
    \includegraphics[trim={0cm 0.5cm 0cm 1cm}, clip, width=0.4\linewidth]{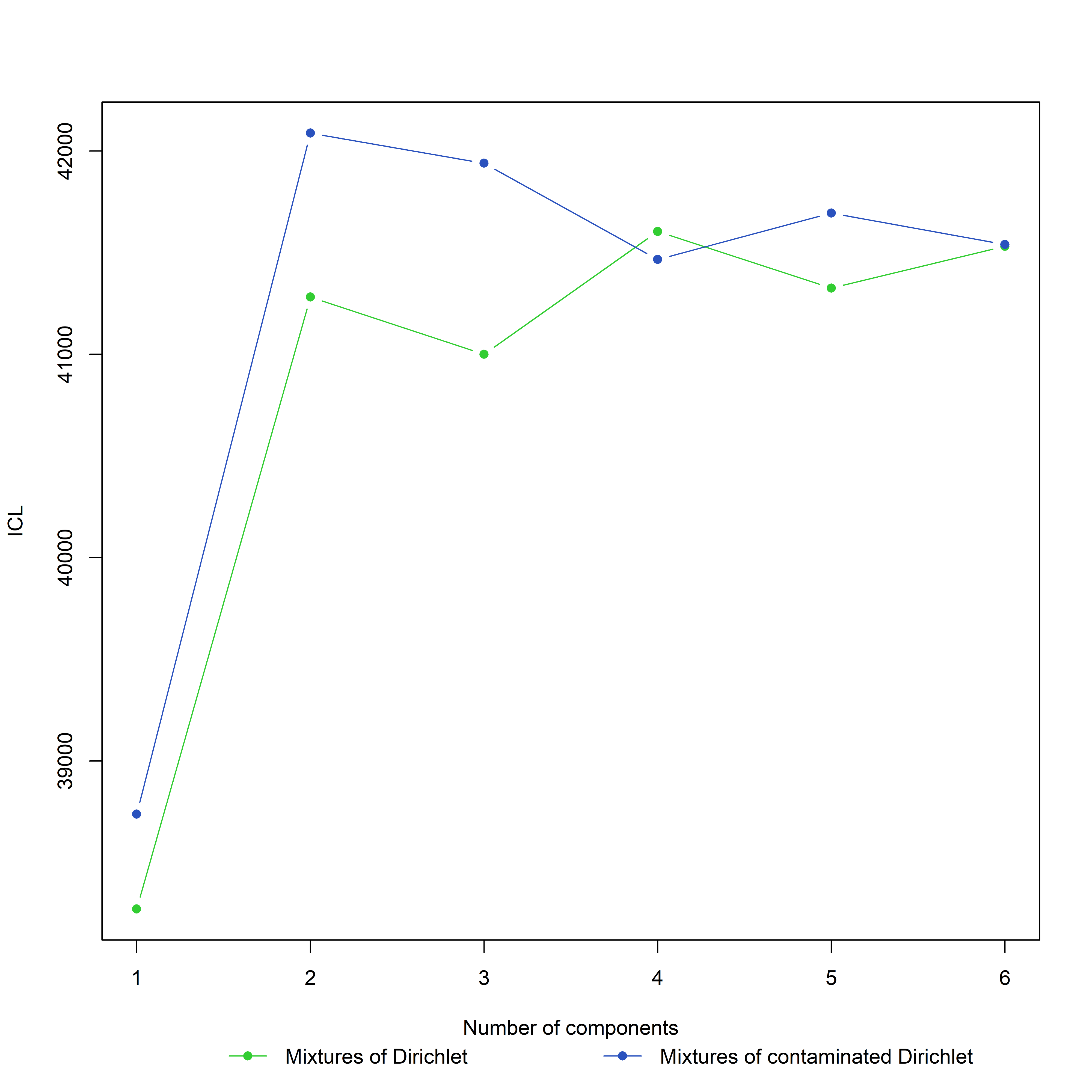}
    \caption{ICL values for predetermined number of components $G=1,\dots,6$. Here, selection is made based on the number of components that maximises the ICL.}
    \label{fig:bic_icl}
\end{figure}

Out of the candidate components that were fitted, a four-component reference Dirichlet mixture model produced the largest ICL value, while the ICL is largest at a two-component contaminated Dirichlet mixture model. Notice that the cluster coloured in pink in \figurename~\ref{fig::ref_fit} by the mixtures of Dirichlet distribution has been identified as outliers indicated by the grey points in \figurename~\ref{fig::ref_fit} when a two-component contaminated Dirichlet mixture model is fitted. Further, the green cluster in \figurename~\ref{fig::ref_fit} has been absorbed into the blue cluster in \figurename~\ref{fig::cont_fit}, suggesting that the variability in the contaminated Dirichlet can better explain these points under one cluster, compared to its reference counterpart. The estimates from fitting a two-component contaminated Dirichlet mixture model are given in \tablename~\ref{table::cont_estimates}. Overall, the model identified 16.389\% of rows as outliers and can be summarised as follows:

Cluster 1 (32\%)- a work-centred group, characterised by high proportions of time devoted to work and personal care, with relatively little leisure or household activity.\\
Cluster 2 (68\%)- a home and leisure-centred group, characterised by substantially less work and more leisure and household responsibilities.

Within each cluster, between 20\% to 30\% of individuals exhibit atypical daily time-use patterns. Rather than forming a separate cluster, these are accommodated through the contamination component, allowing the model to remain flexible to unusual observations while preserving the main behaviour suggested by the identified clusters.
%\begin{figure}[ht]
%    \centering
%    \fbox{\includegraphics[width=0.5\linewidth]{icl6.png}}
%    \caption{Pair-plots of the variables in the ATUS dataset clustered according to the MAP allocations after fitting a $K=6$-component Dirichlet mixture model.}
%    \label{icl6}
%\end{figure}

\begin{figure}[H]
    \centering
    \includegraphics[trim={0.5cm 0.5cm 0.5cm 0.7cm}, clip, width=0.63\linewidth]{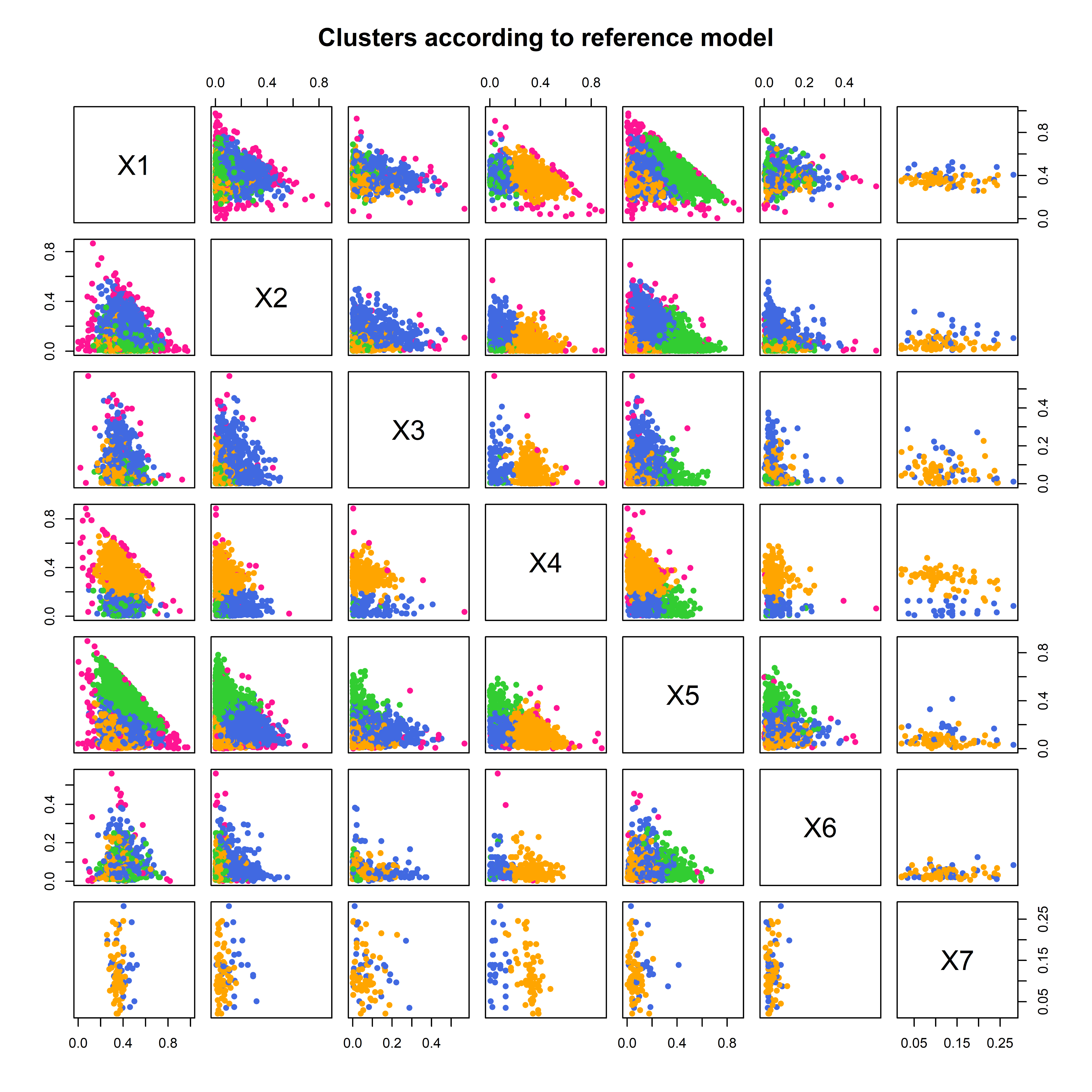}
    \caption{Cluster-wise pair plots of observed time use proportions from the ATUS dataset clustered according to the MAP classifications after fitting a $G=4$-component reference Dirichlet mixture model.}
    \label{fig::ref_fit}
\end{figure}

\begin{figure}[H]
    \centering
    \includegraphics[trim={0.5cm 0.5cm 0.5cm 0.7cm}, clip, width=0.63\linewidth]{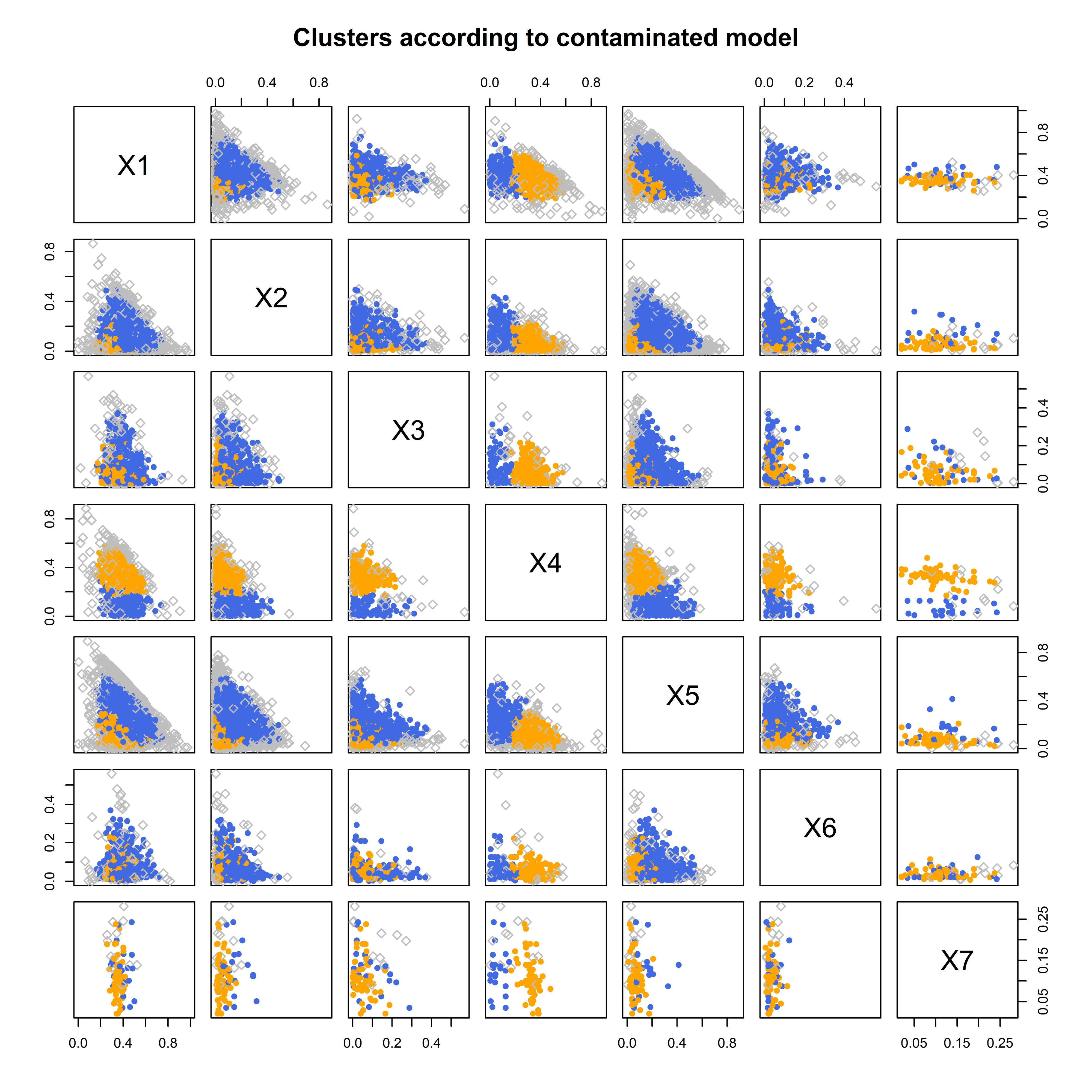}
    \caption{Cluster-wise pair plots of observed time use proportions from the ATUS dataset clustered according to the MAP classifications after fitting a $G=2$-component contaminated Dirichlet mixture model. The grey points are rows that have been identified as outliers by the EM-based algorithm.}
    \label{fig::cont_fit}
\end{figure}

\begin{table}[H]
\centering
\caption{Parameter estimates after fitting a two-component contaminated Dirichlet mixture model.}
\label{table::cont_estimates}
\begin{tabular}{lSSSSSSSSSSS}
\toprule
Cluster &
{$\hat{\pi}_g$} &
{$\hat{\varepsilon}_g$} &
{$\hat{\mu}_{1g}$} &
{$\hat{\mu}_{2g}$} &
{$\hat{\mu}_{3g}$} &
{$\hat{\mu}_{4g}$} &
{$\hat{\mu}_{5g}$} &
{$\hat{\mu}_{6g}$} &
{$\hat{\mu}_{7g}$} &
{$\hat \gamma_g$}     &
{$\hat \eta_g$}       \\
\midrule
$g=1$ &
0.3170296  & 
0.2153885  & 
0.38615297 &
0.05349394 &
0.04233250 &
0.32891516 &
0.10700475 &
0.04713015 &
0.03497054 &
0.03435703 &
2.611211  \\

$g=2$ &
 0.6829704 &
 0.2782382 &
0.43247635 &
0.12197052 &
0.05558154 &
0.05915569 &
0.25809062 &
0.05841992 &
0.01430536 &
0.06677725 &
2.990318   \\
\bottomrule
\end{tabular}
\end{table}

\section{Conclusion}
The methodology has established that the maximum likelihood estimates of observed log-likelihood of a mean-parametrised contaminated Dirichlet distribution when data are missing-at-random exist, are unique, and are identifiable up to permutation provided that the number of components does not exceed the dimension of the simplex. It has also established an idea of what outliers may look like on the simplex, through uniform noise. The EM-based algorithm developed simultaneously assigns each data point a soft cluster membership and an outlier belonging, while accounting for incompleteness. Simulations demonstrate clustering and outlier detection capabilities, even under severe data incompleteness, including observations with only a single observed component and datasets containing up to 90\% of unobserved values.\\
The proposed methodology was also applied to a real dataset involving compositional variables subject to physical additive constraints. The algorithm successfully estimated finite mixture models and identified meaningful clusters directly on the simplex. Because the analysis remains on the simplex, the resulting clusters can be interpreted directly in terms of the original compositional variables, without requiring transformations or ad hoc adjustments.
More broadly, this work reinforces the role of mixtures of the Dirichlet distribution as suitable models for compositional data. When incomplete compositions arise, analysts have historically relied on methods designed for unconstrained data, adapting the data rather than the model. The framework proposed here reverses this perspective by developing estimation tools that respect the geometry and constraints of compositional data. By enabling likelihood-based inference and clustering directly on the simplex in the presence of unobserved parts, this work opens the door to further developments in simplex-based modeling, including richer mixture models and more flexible inference procedures.

\begin{appendix}
\label{appn}
\section{Proofs of supplementary results}
\begin{proof}[Proof of lemma \ref{coercive_lemma} ]
    \label{coercive_proof}
    Let $ \theta = \| [\gamma, \eta]^{\top} \|_2$. Then we evaluate the limit:
    \begin{align*}
        \underset{\theta \rightarrow \infty}{\lim} \ell_o( \bpsi; \mathcal{X}_o) 
        &= \underset{\theta \rightarrow \infty}{\lim}\sum_{i=1}^n \ln f_{\mathcal{CD}}(\bx_{i,o}^v;\bmu_{i,o}^v, \gamma, \eta, \varepsilon)\\
        &= \sum_{i=1}^n \ln \left[ \varepsilon \underset{\theta \rightarrow \infty}{\lim}f_{\mathcal{D}}(\bx_{i,o}^v;\bmu_{i,o}^v, \eta\gamma) + (1-\varepsilon)\underset{\theta \rightarrow \infty}{\lim}f_{\mathcal{D}}(\bx_{i,o}^v;\bmu_{i,o}^v, \gamma) \right] 
    \end{align*}
    First note $\forall~\eta, \gamma>1$ we have that $\eta \gamma = \sqrt{\eta^2 \gamma^2} \geq \| [\gamma, \eta]^{\top} \|$.
    Further, $0 < \frac{1}{\eta \gamma}< \frac{1}{\gamma} < 1,$ and 
    \[
    f_{\mathcal{CD}}(\bx_{i,o}^v;\bmu_{i,o}^v, \gamma, \eta, \varepsilon) < f_{\mathcal{D}}(\bx_{i,o}^v;\bmu_{i,o}^v, \eta\gamma) + f_{\mathcal{D}}(\bx_{i,o}^v;\bmu_{i,o}^v, \gamma).\]
    
    Notice that $\mu_k<1 ~\forall k \in \os_i$ and $\forall i=1,\dots,n$, implying that $ \frac{\mu_k}{\eta \gamma} < \frac{\mu_k}{\gamma}$. From the monotonic decrease of $\Gamma(\cdot)$ on $(0,1]$ we have that
    \begin{align}
    \label{gamma_inequality}
        \frac{\Gamma\left(\frac{1}{\gamma}\right)}{\displaystyle \prod_{k \in \os_i}^{p_o} \Gamma\left(\frac{\mu_k}{\gamma} \right) } < \frac{\Gamma\left(\frac{1}{\eta \gamma}\right)}{\displaystyle \prod_{k \in \os_i}^{p_o} \Gamma\left(\frac{\mu_k}{\eta \gamma} \right) }.
    \end{align}
    Further, $\bx_i \in \simplex$ , which implies that
    \begin{align}
    \label{x_inequality}
        \prod_{k \in \os_i}^{p_o} x_{ik}^{\frac{\mu_k}{\gamma}-1} < \prod_{k \in \os_i}^{p_o} x_{ik}^{\frac{\mu_k}{\eta\gamma}-1}
    \end{align}
    Combining resutls \eqref{gamma_inequality} and \eqref{x_inequality}, we conclude that $\forall~\eta, \gamma >1:$
    \begin{align*}
         f_{\mathcal{CD}}(\bx_{i,o}^v;\bmu_{i,o}^v, \gamma, \eta, \varepsilon) < 2  f_{\mathcal{D}}(\bx_{i,o}^v;\bmu_{i,o}^v, \eta \gamma).
    \end{align*}
    Without loss of generality, suppose $\mu_{p_o} = \min \{ \mu_k:k \in \os_i\}$. Since $\Gamma(z)>1 ~ \forall z \in (0,1]$. Since $\Gamma\left( \frac{\mu_k}{\eta\gamma} \right) > \Gamma\left(\frac{1}{\eta\gamma}\right)~ \forall k \in \os_i$ we find that $ \forall ~\gamma, \eta >1$
    \begin{align*}
        \Gamma\left(\frac{1}{\eta\gamma} \right) < \displaystyle\prod_{k \in \os_i}^{p_o -1 } \Gamma\left( \frac{\mu_k}{\eta \gamma} \right) 
        \implies
       0
       <\frac{\Gamma\left( \frac{1}{\eta\gamma} \right)}{ \displaystyle\prod_{k \in \os_i}^{p_o} \Gamma\left(\frac{\mu_k}{\eta \gamma} \right) } 
        <
        \frac{1}{\Gamma \left(\frac{\mu_{p_o}}{\eta \gamma} \right)}
%        < 
%        \frac{1}{\Gamma \left(\frac{\mu_{p_o}}{\gamma} \right)}
        < \frac{1}{\Gamma \left(\frac{\epsilon}{\gamma} \right)},
    \end{align*}
Further, we find that $\frac{1}{\Gamma \left(\frac{\epsilon}{ \gamma} \right)} \rightarrow 0$ as $\gamma \rightarrow \infty$ through $\theta \rightarrow \infty$. By the squeeze theorem, 
\begin{align*}
%\label{norm_const_lim}
\underset{\theta \rightarrow \infty}{\lim}\frac{\Gamma\left( \frac{1}{\eta\gamma} \right)}{\displaystyle \prod_{k \in \os_i}^{p_o} \Gamma\left(\frac{\mu_k}{\eta \gamma} \right) }  = 0.    
\end{align*}
Now as $\gamma \rightarrow \infty$ through $\theta$,
\begin{align*}
    \lim_{\theta\rightarrow\infty} \prod_{k\in O_i}^{p_0} x_{ik}^{\frac{\mu_k}{\gamma}-1} \left( 1-\|\bx_{i}\|_2 \right)^{ \frac{1-\|\boldsymbol{\mu}_{i,o}\|}{\gamma}-1 }
=\lim_{\theta\rightarrow\infty}& 
\frac{\left( \displaystyle\prod_{k\in O_i}^{p_0} x_{ik}^{\mu_k} \left( 1-\| \bx_{i} \|_2\right)^{1-\|\bmu_{i,o}\|_2}\right)^{ \frac{1}{\gamma}} }{\displaystyle\prod_{k\in O_i}^{p_0} x_{ik}^{\mu_k} \left( 1-\| \bx_{i} \|_2 \right)^{1-\|\bmu_{i,o}\|_2}}.
\end{align*}
Since $
0<
\prod_{k\in O_i}^{p_0}
x_{ik}^{\mu_k}
\left(
1-\|\bx_{i}\|_2
\right)^{1-\|\bmu_{i,o}\|_2}
<1,
$
we obtain
\[
\lim_{\theta\rightarrow\infty}
\left(
\prod_{k\in O_i}^{p_0}
x_{ik}^{\mu_k}
\left(
1-\|\bx_{i}\|_2
\right)^{1-\|\bmu_{i,o}\|_2}
\right)^{\frac1\gamma}
=1.
\]

Therefore
\[
\lim_{\theta\rightarrow\infty}
f_{\mathcal D} \left( \bx_{i,o}^v;\bmu_{i,o}^v,\eta\gamma \right) =0
\cdot
\left(\prod_{k\in \os_i}^{p_0} x_{ik}^{\mu_k} \left( 1- \|\bx_{i}\| _2\right)^{1-\|\bmu_{i,o}\|_2} \right)^{-1} =0,
\]
pointwise on $\mathcal X_{o}$. This is for arbitrary $\boldsymbol{\mu}\in \simplex^{\epsilon}$. That is,
\[
\lim_{\theta\rightarrow\infty}
\sup_{\bmu \in \simplex^{\epsilon} }
f_{\mathcal D} \left( \bx_{i};\bmu_{i,o}, \eta\gamma \right) = 0 \]
by the squeeze theorem. This implies that 
\[ \lim_{\theta\rightarrow\infty} \ln f_{\mathcal {CD}} \left( \bx_{i};\bmu_{i,o},\gamma, \eta, \varepsilon \right) = -\infty, \] as required.
\end{proof}

\begin{proof}[Proof of Lemma \ref{neg_def_lemma}]
    \label{neg_def_proof}
    Simplifying, $\bm H_{i,o}$ can be written in matrix form:
    \[
    \bm H_{i,o} = \bm D + \bm\nabla_{\bmu}f_{\mathcal D}~ \bm \xi^{\top}, \]
    where
    \[
    \bm D = \operatorname{diag} \left( -\frac{f(\bx_i^v;\bmu_{i,o}^v,\gamma)} \gamma  \varphi^{(1)} \!\left( \frac{\mu_k}{\gamma} \right)\right)_{k\in\os_i}
    \]
    and
    \[
    \bm{\xi} = - \left( \varphi \!\left(\frac{\mu_k}{\gamma}\right) -\frac1 \gamma\ln x_{ik} \right)_{k\in\os_i}.
    \]
    Now $\bm D \prec \bm0,$ and $ \bm \nabla_{\bmu}f\,\bm \xi^\top$ is symmetric and rank one. This implies that 
    \begin{align*}
        \bm \xi = C\nabla_{\bmu}f_{\mathcal D} \text{, where } C=f_{\mathcal D}(\bx_{i,o}^v;\bmu_{i,o}^v,\gamma).
    \end{align*}
    Hence $\bm H_{i,o}$ is perturbed by a rank-one matrix, implying at most one eigenvalue is perturbed, leaving the rest undisturbed. Therefore
    \begin{align*}
        |\bm H_{i,o}| = \left( 1+ \bm \nabla_{\bmu}f_{\mathcal D}^\top \bm D^{-1} \bm \nabla_{\bmu}f_{\mathcal D} \right) |\bm D|.
    \end{align*}
    The perturbation can change the parity of the number of negative eigenvalues. However,
    \[
    1+ \bm \nabla_{\bmu}f_{\mathcal D}^\top \bm D^{-1} \bm \nabla_{\bmu}f_{\mathcal D} >0, \qquad \forall\; \bmu\in\simplex^\epsilon.\]
    Hence
    \[ \operatorname{sign}(\bm H_{i,o}) = \operatorname{sign}(\bm D) \implies \bm H_{i,o} \prec \bm 0.\]
\end{proof}
\begin{proof}[Proof $\tilde{Q}$ is locally Lipschitz continuous on $\bmu_g$]
    For any subset
\[
\mathcal B=\left\{\bx\in\mathbb{R}_+^p:\min(\bx)\geq\varepsilon\right\},
\]
where $\varepsilon>0$ is fixed, for any $\bx,\by\in\mathcal B$. Let $\delta(x,y) = \varphi^{(1)}(x) - \varphi^{(1)}(y) $
\begin{align*}
\left\|\bm H_{g}(\bx)-\bm H_{g}(\by)\right\|_2
&\leq
\sup_{\ell\in \{1,\dots,p\} } 
\left| -\frac{n_{\mathrm{ref.},g}}{\gamma_g^2} \delta\left(\frac{y_\ell}{\gamma_g},\frac{x_\ell}{\gamma_g}\right)
+\frac{n_{\mathrm{cont.},g}}{(\eta_g\gamma_g)^2}  
\delta\left(\frac{y_\ell}{\eta_g\gamma_g},\frac{x_\ell}{\eta_g\gamma_g}\right)
\right| \\
&\leq
\sup_{\ell\in \{1,\dots,p\} } 
\left| \frac{n_{\mathrm{ref.},g}}{\gamma_g^2} \delta\left(\frac{y_\ell}{\gamma_g},\frac{x_\ell}{\gamma_g} \right) \right|
+\sup_{\ell\in \{1,\dots,p\} }\left| \frac{n_{\mathrm{cont.},g}}{(\eta_g\gamma_g)^2}  \delta\left(\frac{y_\ell}{\eta_g\gamma_g},\frac{x_\ell}{\eta_g\gamma_g}\right) \right| \\
&\leq c\left| \delta(\zeta y_{sup}, \zeta x_{sup}) \right|
\end{align*}

Now, by the Mean Value Theorem
\begin{align*}
\left|\varphi^{(1)}(\zeta x_{sup}) - \varphi^{(1)}(\zeta y_{sup}) \right|
&= c_1 \left| \varphi^{(2)}(z_{sup}) \right| \zeta |x_{sup}-y_{sup}|,\\
\text{for some $z_{sup}\in[\zeta x_{sup},\zeta y_{sup}]$ and some $c_1>0$. Hence,}\\
\sup_{z_{sup}\in[\zeta x_{sup},\zeta y_{sup}]} c_1 \left| \varphi^{(2)}(z_{sup}) \right| \zeta |x_{sup}-y_{sup}| &\leq L\|\bx-\by\|_2,
\end{align*}
for some $L\in(0,\infty)$.
\end{proof}
 %% if no title is needed, leave empty \section*{}.
%Appendices should be provided in \verb|{appendix}| environment,
%before Acknowledgements.
%
%If there is only one appendix,
%then please refer to it in text as \ldots\ in the \hyperref[appn]{Appendix}.
\end{appendix}
%%%%%%%%%%%%%%%%%%%%%%%%%%%%%%%%%%%%%%%%%%%%%%
%% Example with multiple Appendixes:        %%
%%%%%%%%%%%%%%%%%%%%%%%%%%%%%%%%%%%%%%%%%%%%%%
%\begin{appendix}
%\section{Title of the first appendix}\label{appA}
%If there are more than one appendix, then please refer to it
%as \ldots\ in Appendix \ref{appA}, Appendix \ref{appB}, etc.
%
%\section{Title of the second appendix}\label{appB}
%\subsection{First subsection of Appendix \protect\ref{appB}}
%
%Use the standard \LaTeX\ commands for headings in \verb|{appendix}|.
%Headings and other objects will be numbered automatically.
%\begin{equation}
%\mathcal{P}=(j_{k,1},j_{k,2},\dots,j_{k,m(k)}). \label{path}
%\end{equation}
%
%Sample of cross-reference to the formula (\ref{path}) in Appendix \ref{appB}.
%\end{appendix}

%%%%%%%%%%%%%%%%%%%%%%%%%%%%%%%%%%%%%%%%%%%%%%
%% Acknowledgements                         %%
%% should be provided in the                %%
%% Acknowledgements section.                %%
%%%%%%%%%%%%%%%%%%%%%%%%%%%%%%%%%%%%%%%%%%%%%%
\begin{acks}[Acknowledgments]
The authors would like to thank the anonymous referees, an Associate
Editor and the Editor for their constructive comments that improved the
quality of this paper.
\end{acks}

%%%%%%%%%%%%%%%%%%%%%%%%%%%%%%%%%%%%%%%%%%%%%%
%% Funding information, if any,             %%
%% should be provided in the                %%
%% funding section.                         %%
%%%%%%%%%%%%%%%%%%%%%%%%%%%%%%%%%%%%%%%%%%%%%%
\begin{funding}
%The first author was supported by NSF Grant DMS-??-??????.
%The second author was supported in part by NIH Grant ???????????.
\end{funding}

\bibliographystyle{imsart-number} % Style BST file (imsart-number.bst or imsart-nameyear.bst)
\bibliography{bibliography}       % Bibliography file (usually '*.bib')

%% or include bibliography directly:
%\begin{thebibliography}{9}
%
%\bibitem{r1}
%\textsc{Billingsley, P.} (1999). \textit{Convergence of
%Probability Measures}, 2nd ed.
%Wiley, New York.
%\MR{1700749}
%
%\bibitem{r2}
%\textsc{Bourbaki, N.}  (1966). \textit{General Topology}  \textbf{1}.
%Addison--Wesley, Reading, MA.
%
%\bibitem{r3}
%\textsc{Ethier, S. N.} and \textsc{Kurtz, T. G.} (1985).
%\textit{Markov Processes: Characterization and Convergence}.
%Wiley, New York.
%\MR{838085}
%
%\bibitem{r4}
%\textsc{Prokhorov, Yu.} (1956).
%Convergence of random processes and limit theorems in probability
%theory. \textit{Theory  Probab.  Appl.}
%\textbf{1} 157--214.
%\MR{84896}
%\end{thebibliography}

\end{document}